\definecolor{xblue}{rgb}{0.27,0.45,0.87}
\definecolor{white}{rgb}{0.98,0.98,0.98}
\newtheorem{theorem}{Theorem}
\newtheorem{definition}{Definition}
\newtheorem{lemma}[theorem]{Lemma}
\Crefname{equation}{Eq.}{Eqs.}
\renewcommand{\grad}{\text{grad }}
\newcommand{\R}{\mathbb{R}}
\newcommand{\dpar}[2]{\frac{\partial #1}{\partial #2}}
\newcommand{\btheta}{{\bm{\theta}}}
\newcommand{\bx}{{\bm{x}}}
\newcommand{\bphi}{\bm{\phi}}
\newcommand{\GL}[1]{\mathrm{GL}(#1)}
\newcommand{\Ad}[1]{\mathrm{Ad}_{#1}\,}
\newcommand{\ad}[1]{\mathrm{ad}_{#1}\,}
\newcommand{\Span}[1]{\mathrm{span}\left(\left\{#1\right\}\right)}
\newcommand{\SU}[1]{\mathrm{SU}(#1)}
\newcommand{\su}[1]{\mathfrak{su}(#1)}
\newcounter{circuit}[section]
\crefname{table}{table}{tables}
\Crefname{table}{Table}{Tables}
\crefname{circuit}{circuit}{circuits}
\Crefname{circuit}{Circuit}{Circuits}
\Crefname{figure}{Fig.}{Figs.}
\Crefname{figure}{Fig.}{Figs.}
\begin{document}

\title{Here comes the SU(N): multivariate quantum gates and gradients}

\author{Roeland Wiersema}
\affiliation{Vector Institute, MaRS  Centre,  Toronto,  Ontario,  M5G  1M1,  Canada}
\affiliation{Department of Physics and Astronomy, University of Waterloo, Ontario, N2L 3G1, Canada}
\author{Dylan Lewis}
\affiliation{Department of Physics and Astronomy, University College London, London WC1E 6BT, United Kingdom}
\author{David Wierichs}
\affiliation{Xanadu, Toronto, ON, M5G 2C8, Canada}
\author{Juan Carrasquilla}
\affiliation{Vector Institute, MaRS  Centre,  Toronto,  Ontario,  M5G  1M1,  Canada}
\affiliation{Department of Physics and Astronomy, University of Waterloo, Ontario, N2L 3G1, Canada}
\author{Nathan Killoran}
\affiliation{Xanadu, Toronto, ON, M5G 2C8, Canada}


\begin{abstract}
    Variational quantum algorithms use non-convex optimization methods to find the optimal parameters for a parametrized quantum circuit in order to solve a computational problem. The choice of the circuit ansatz, which consists of parameterized gates, is crucial to the success of these algorithms. Here, we propose a gate which fully parameterizes the special unitary group $\SU{N}$. This gate is generated by a sum of non-commuting operators, and we provide a method for calculating its gradient on quantum hardware. In addition, we provide a theorem for the computational complexity of calculating these gradients by using results from Lie algebra theory. In doing so, we further generalize previous parameter-shift methods. We show that the proposed gate and its optimization satisfy the quantum speed limit, resulting in geodesics on the unitary group. Finally, we give numerical evidence to support the feasibility of our approach and show the advantage of our gate over a standard gate decomposition scheme. In doing so, we show that not only the expressibility of an ansatz matters, but also how it's explicitly parameterized.
\end{abstract}
\maketitle

\section{Introduction}

Variational quantum computing is a paradigm of quantum computing that uses optimization algorithms to find the optimal parameters for a parameterized quantum circuit~\cite{Cerezo2021vqa,Tilly2022vqe}. Crucial for the success of such algorithms is the choice of circuit ansatz, which usually consists of multiple parameterized one and two-qubit gates. Typically, these gates are parameterized unitary matrices generated by single Pauli-string operators that can locally rotate a state around some axis: $U(t) = \exp{i t G}$, where $t$ is a gate parameter and $G$ a Pauli string. For a specific family of cost functions, there exist a variety of methods that allow one to obtain the gradient with respect to $t$~\cite{ Li2017grads,Mitarai2018grads, Schuld2019grads,Crooks2019grads, Izmaylov2021grads, Wierichs2022grads, Oleksandr2021grads} on quantum hardware. With these gradients, the cost function can be minimized via any gradient-descent-based algorithm.

Instead of considering a gate generated by a single Pauli string, one can construct more general parameterized gates that can perform an arbitrary rotation in $\SU{N}$, the special unitary group. These general $\SU{N}$ rotations are used in a variety of quantum algorithms~\cite{Theis2023propershiftrules, Slattery2022uniblock,Liu2019fewer, Abhinav2017hardware}. In practice, rotations in $\SU{N}$ can be implemented by composing several simple parameterized gates together into a more complicated one. For example, for single and two-qubit gates (where $N=2,4$, respectively), there exist several general decomposition schemes of such gates into products of single-qubit gates and CNOTs~\cite{ Khaneja2001cartan,kraus2001optimal,Vatan2004opt2qubit, vatan2004realization,Vartiainen2004, Dalessandro2006decompuni}. In practice, this compilation comes with hardware-specific challenges, since quantum hardware usually has a set of native gates into which all others have to be decomposed~\cite{Zulehner2019compiling, Foxen2020compiling}. 

Choosing the right parameterization for a function is important because it can significantly affect the properties of its gradients. Reparameterizing functions to obtain more useful gradients is a well-known method in statistics and machine learning. For example, in restricted maximum likelihood methods one can ensure numerical stability of quasi-Newton methods by decomposing covariance matrices into Cholesky factors~\cite{groeneveld1994reparameterization}. In addition, methods like auxiliary linear transformations~\cite{raiko2012deep}, batch normalization~\cite{ioffe2015batch} and weight normalization~\cite{salimans2016weight} are used to improve the gradients in neural networks. In variational inference, the reparameterization trick~\cite{price1958useful} is at the core of variational autoencoder approaches and allows for gradients for stochastic back-propagation~\cite{rezende2014stochastic, welling2014vae}. In light of this, it may be worthwhile to investigate alternative parameterizations of quantum gates for variational quantum algorithms.

In this work, we propose a family of parameterized unitaries called $\mathbb{SU}(N)$ gates and provide a method to evaluate their gradients on quantum hardware. In doing so, we generalize the prior literature one step further, since many past schemes can be understood as special cases of our proposal~\cite{ Li2017grads,Mitarai2018grads, Schuld2019grads,Crooks2019grads, Izmaylov2021grads, Wierichs2022grads, Oleksandr2021grads}. We provide numerical results to support the validity of our approach and give several examples to illustrate the capabilities of the $\mathbb{SU}(N)$ gate. We show that this gate satisfies the quantum speed limit and that it is easier to optimize compared to $\SU{N}$ parameterizations that consist of products of gates. We argue that this is the case because the product of unitaries creates a ``bias'' in the Lie algebra that deforms the cost landscape. In addition, we highlight the connections between our formalism and the properties of semisimple Lie algebras and establish a bound on the computational complexity of the gradient estimation using tools from representation theory.
\section{\texorpdfstring{$\mathbb{SU}(N)$ gates}{sun gates} \label{sec:method}}

A quantum gate is a unitary operation $U$ that acts on a quantum state $\rho$ in a complex Hilbert space. If we ignore a global phase, then a gate $U$ acting on $N_{\mathrm{qubits}}$ qubits is an element of the special unitary group $\SU{N}$ (see App. \ref{app:sun}), where $N = 2^{N_{\mathrm{qubits}}}$. Note that all of the following works for any $N>1$, but here we restrict ourselves to the qubit case. We are interested in constructing a quantum gate that parameterizes all of $\SU{N}$. To achieve this, we make use of the theory of Lie algebras. We will not be concerned with the formal treatment of this topic, which can be found in many excellent textbooks~\cite{hall2013lie, fulton2013representation, rossmann2002lie}. 

To construct our gate, we realize that $\SU{N}$ is a (semisimple) Lie group and so there exists a unique connection between its elements and the Lie algebra $\su{N}$ via the so-called Lie correspondence, or Lie's third theorem~\cite{Serre2009lie, rossmann2002lie}. In particular, each $g \in \SU{N}$ can be identified with an $A \in \su{N}$ via the exponential map $g = \exp{A}$. For our purposes, we can understand the Lie algebra $\su{N}$ as a vector space of dimension $N^2-1$ that is closed under the commutator, $\comm{A}{B} = AB - BA\in\su{N}$ for $A,B \in \su{N}$.
For $\su{N}$, we choose as a basis the tensor products of Pauli matrices multiplied by the imaginary unit $i$:
\begin{align}
    \mathcal{P}^{(N_{\mathrm{qubits}})} =\left\{ i(\sigma_1 \otimes \ldots\otimes \sigma_{N_{\mathrm{qubits}}})\right\}\setminus \{I_{N_{\mathrm{qubits}}}\},\label{eq:paulis}
\end{align}
where $ \sigma_i \in \{I,X,Y,Z\} $ and $I_{N_{\mathrm{qubits}}}=iI^{\otimes N_{\mathrm{qubits}}}$.

We choose the following parameterization of $\SU{N}$:

\begin{align}
    U(\btheta) = \exp{ A(\btheta)},\quad  A(\btheta) = \sum_m \theta_m G_m,\label{eq:gate}
\end{align}
\noindent
where $\btheta = (\theta_1,\theta_2,\ldots, \theta_{N^2-1}) \in \mathbb{R}^{N^2-1}$ and $\{G_m\} \in \mathcal{P}^{(N_{\mathrm{qubits}})}$. To distinguish between the group and the gate, we call the parameterization in \Cref{eq:gate} a $\mathbb{SU}(N)$ gate. The coordinates $\btheta$ are called the canonical coordinates, which uniquely parameterize $U$ through the Lie algebra $\su{N}$. Since we typically cannot implement the above gate in hardware, we will have to be decompose it via a standard unitary decomposition algorithm~\cite{ Khaneja2001cartan,kraus2001optimal,Vatan2004opt2qubit, vatan2004realization,Vartiainen2004, Dalessandro2006decompuni}. We emphasize here that even though the gate will be decomposed, it is parameterized as an exponential map. Hence we can understand \Cref{eq:gate} as a change of coordinates from the $\SU{N}$ gate decomposition.

If we do not want to parameterize all of $\SU{N}$, we can instead parameterize a more restricted Hamiltonian by setting some of the parameters $\theta_m$ to zero. This makes \Cref{eq:gate} a natural parameterization of several Hamiltonians available on modern quantum hardware platforms. These Hamiltonians often have multiple independently tunable fields which can be active at the same time and do not necessarily commute. One typically has local control on each qubit and access to an interacting Hamiltonian between pairs of qubits, depending on the topology of the quantum device~\cite{Schuch2003gates}. The interacting pair can for example be a $ZZ$ interaction for Josephson flux qubits~\cite{Orlando1999supercond}, a Heisenberg interaction for nuclear spins in doped silicon~\cite{Kane1998nuc} or an $XY$ interaction in quantum dots interacting with a cavity~\cite{Imamoglu1999qcavity}. 

To use this gate in a gradient-based variational quantum algorithm, we have to be able to obtain partial derivatives of $U(\btheta)$ with respect to each parameter $\theta_l$. Although there exist a variety of works that provide analytical expressions for gradients through quantum circuits via the parameter-shift rule~\cite{Li2017grads,Mitarai2018grads, Schuld2019grads,Crooks2019grads, Izmaylov2021grads, Wierichs2022grads, Oleksandr2021grads, leng2022differentiable}, these works almost uniformly assume that the gate is of the form $U(\theta) =\exp{i\theta P}$, where $P$ is a Hermitian operator. As far as we are aware, the only methods to obtain gradients of \Cref{eq:gate} with respect to $\btheta$ are the stochastic and Nyquist parameter-shift rules of~\cite{Crooks2019grads} and~\cite{Theis2023propershiftrules}, respectively. The first approach relies on an integral identity for bounded operators that is estimated via Monte Carlo~\cite{Wilcox1967exponential}, whereas the latter is based on a theorem in Fourier analysis~\cite{whittaker1915}.

\section{Obtaining the gradient} \label{sec:grad}
Here, we provide a new approach to obtain the gradient of \Cref{eq:gate} that makes use of differentiable programming, which is efficient for gates acting on a small number of qubits. To start, we note that the partial derivative with respect to a parameter $\theta_l$ is given by
\begin{align}
    \dpar{}{\theta_l} U(\btheta) &= \dpar{}{\theta_l}\exp{A(\btheta)}\nonumber\\
    &= U(\btheta) \sum_{p=0}^\infty \frac{(-1)^p}{(p+1)!}(\mathrm{ad}_{A(\btheta)})^p\dpar{}{\theta_l}A(\btheta).\label{eq:adj}
\end{align}
Here, $\mathrm{ad}_X$ denotes the adjoint action of the Lie algebra given by the commutator $\ad{X}(Y) = \comm{X}{Y}$~\cite{rossmann2002lie}. Furthermore, we write $(\ad{X})^p(Y) = \comm{X}{\comm{X}{\ldots \comm{X}{Y}}}$, hence $(\ad{X})^p$ denotes a nested commutator of $p$ terms. For more details, see App. \ref{app:BCH}. Note that the term on the right of $U(\btheta)$ in \Cref{eq:adj} is an element of the Lie algebra, since $\partial / \partial \theta_l A(\btheta)=G_l \in \su{N}$ and so the commutator keeps the entire sum in the algebra. For notational clarity we define
\begin{align}
    \Omega_l(\btheta) =\sum_{p=0}^\infty \frac{(-1)^p}{(p+1)!}(\mathrm{ad}_{A(\btheta)})^p\dpar{}{\theta_l}A(\btheta),\label{eq:u_dtheta}
\end{align}
where $\Omega_l(\btheta)\in\su{N}$ is a skew-Hermitian operator that generates a unitary, which we call the \emph{effective generator}. Given that \Cref{eq:u_dtheta} is an infinite series of nested commutators it is not clear how $\Omega_l(\btheta)\in\su{N}$ can be calculated in practice without truncating the sum.

We can think of $U$ as a function $U:\mathbb{R}^{N^2-1}\to \SU{N}$ that we evaluate at the point $\btheta$. Since $\SU{N}$ is a differentiable manifold, we can define a set of local coordinates on the group and represent $U(\bx)$ as a matrix described by $N^2-1$ real numbers. Hence, we can think of our gate as a coordinate transformation between the parameters $\bx$ and the entries of the matrix representing the unitary. Since $U(\bx)$ depends smoothly on $x_l$ via the matrix exponential, this coordinate transformation comes with a corresponding Jacobian (or more accurately, pushforward) $dU(\bx): T_{\bx} \mathbb{R}^{N^2-1}\to T_{U(\bx)} \SU{N}$ that maps vectors tangential to $\mathbb{R}^{N^2-1}$ to vectors tangential to $\SU{N}$. We can obtain this Jacobian by differentiating the elements $U_{nm}(\bx)$ with respect to $x_l$:
\begin{align}
    \dpar{}{x_l} U_{nm}(\bx) =
    \partial_{x_l} \mathfrak{Re}[U_{nm}(\bx)] + i \partial_{x_l}\mathfrak{Im}[U_{nm}(\bx)] \label{eq:mat_der}.
\end{align}
To obtain the above matrix function numerically, we rely on the fact that the matrix exponential and its derivative are implemented in differentiable programming frameworks such as JAX~\cite{Jax2018github}, PyTorch~\cite{paszke2019pytorch} and Tensorflow~\cite{tensorflow2015} through automatic differentiation. Here we make use of the JAX implementation, which provides the matrix exponential through a differentiable Pad\'e approximation~\cite{expm, al2010new}. 

Continuing, we note that evaluating $\partial U(\bx)/ \partial x_l $ at a point $\btheta$ produces an element of the tangent space $T_{U(\btheta)} \SU{N}$. We can move from the tangent space to the Lie algebra by left multiplying the elementwise derivative of \Cref{eq:mat_der} in \Cref{eq:adj} with $U^\dag(\btheta)$ (see App. \ref{app:sun}),
\begin{align}
   U^\dag(\btheta) \left(\dpar{}{x_l} U(\bx)\bigg|_{\btheta} \right) =  U^\dag(\btheta)U(\btheta)\Omega_l(\btheta) = \Omega_l(\btheta)\label{eq:omega},
\end{align}
which allows us to obtain $\Omega_l(\btheta)$ exactly, up to machine precision. We emphasize that these steps can be performed on a classical computer, with a cost that is only dependent on the number of qubits the gate acts on, not the number of qubits in the circuit.

We now make the following observation: $\Omega_l(\btheta)$ corresponds to a tangent vector on $\SU{N}$ and generates the one-parameter subgroup $V(t) = \exp{t\Omega_l(\btheta)}$ such that
\begin{align}
    \Omega_l(\btheta) = \frac{d}{dt}\exp{t\Omega_l(\btheta)} \big |_{t=0}
\end{align}
and
\begin{align}
    \dpar{}{\theta_l} U(\btheta) = U(\btheta)\frac{d}{dt}\exp{t\Omega_l(\btheta)} \big |_{t=0}. \label{eq:dU_omega}
\end{align}
We sketch this procedure schematically in \Cref{fig:one_parameter}.
\begin{figure}[htb!]
    \centering
    \includegraphics[width=0.9\columnwidth]{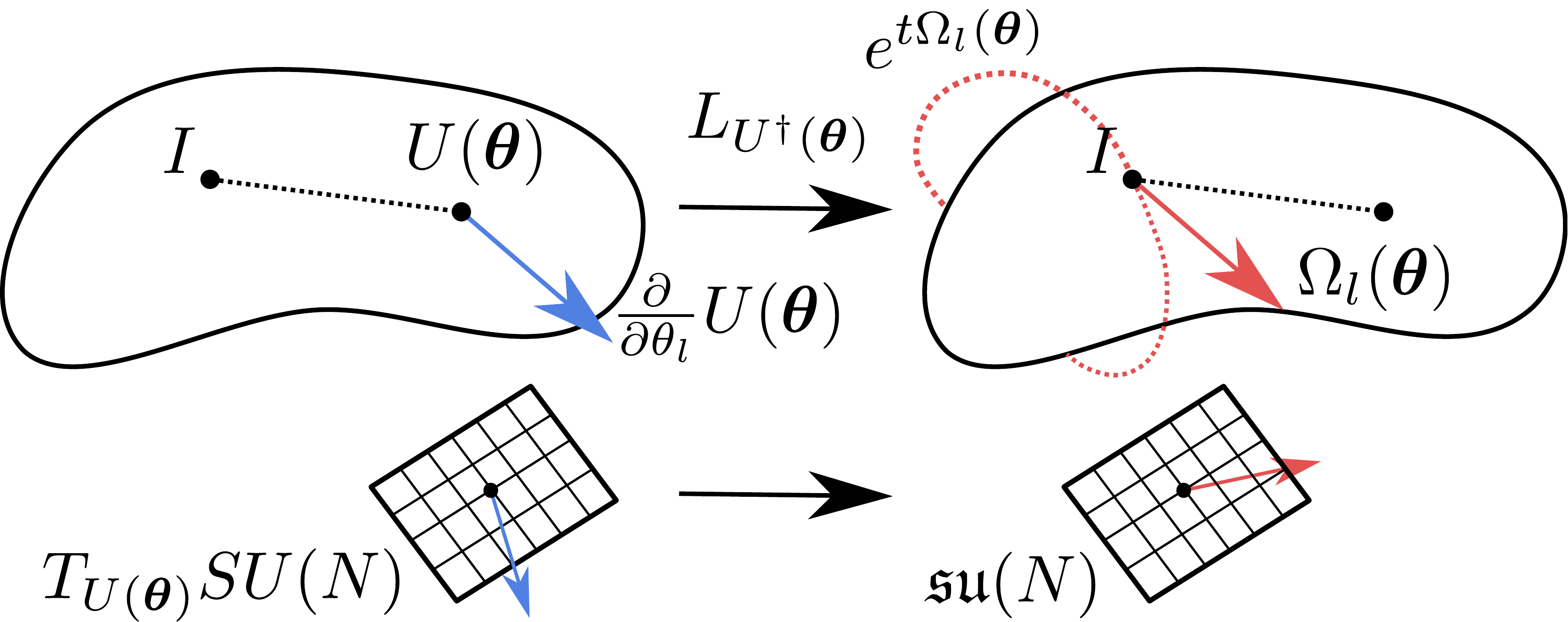}
    \caption{Schematic depiction of our approach. We move to the Lie algebra from the tangent space by left multiplication with  $U^\dag(\btheta)$ and obtain $\Omega_l(\btheta)$. The orbit generated by $\Omega_l(\btheta)$ corresponds to the gate we have to insert in the circuit to compute the gradient.}
    \label{fig:one_parameter}
\end{figure}

We now consider a typical variational setting, where we are interested in minimizing the following cost function:
\begin{align}
    C(\btheta) = \Tr{U(\btheta)\rho U^\dag(\btheta) H},\label{eq:cost}
\end{align}
where $H$ is some Hermitian operator and $\rho$ the initial state of the system. For simplicity we consider a circuit consisting of a single $\mathbb{SU}(N)$ gate. Differentiating the cost function with respect to $\theta_l$ gives
\begin{align}
    \dpar{}{\theta_l} C(\btheta) &= \Tr{\left(\dpar{}{\theta_l}U(\btheta)\right)\rho U^\dag(\btheta)H}+\mathrm{h.c.}
\end{align}
Then, plugging in \Cref{eq:dU_omega} we find,

\begin{align}
    \dpar{}{\theta_l} & C(\btheta)=\nonumber\\
    &\frac{d}{dt}\Tr{\left(U(\btheta)  e^{t\Omega_l(\btheta)}\rho e^{-t\Omega_l(\btheta)}U^\dag(\btheta)\right)H  } \bigg\vert_{t=0}\label{eq:circuit_with_omega},
\end{align}
\noindent
where we used the skew-Hermitian property of the tangent vector $\Omega_l^\dag(\btheta) = -\Omega_l(\btheta)$. Note that \Cref{eq:circuit_with_omega} corresponds to a new circuit with the gate $\exp{t\Omega_l(\btheta)}$ inserted before $U(\btheta)$ (see \Cref{fig:circuit_omega}). 
\begin{figure}[htb!]
    \centering
    \includegraphics[width=\columnwidth]{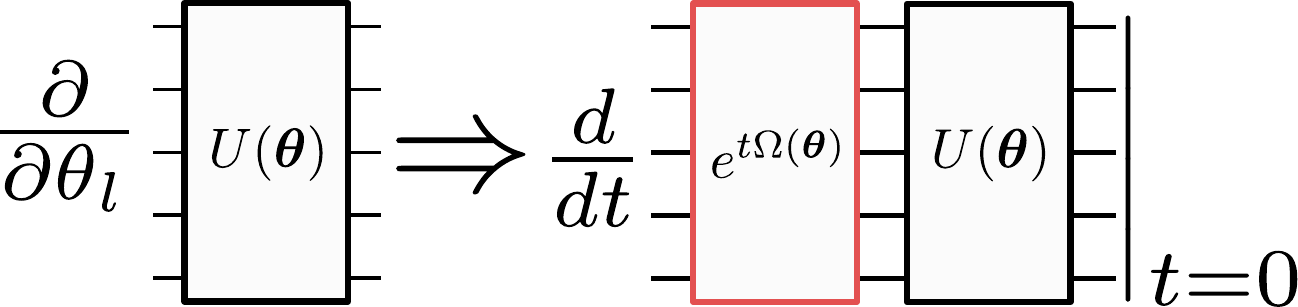}
    \caption{The partial derivative with respect to the gate parameter $\theta_l$ can be obtained by adding a gate to the circuit that is generated by $\Omega_l(\btheta)$. Calculating the derivative with respect to $t$ and evaluating at $t=0$ then provides one with the correct gradient.}
    \label{fig:circuit_omega}
\end{figure}

The gradient of this new circuit can be computed on quantum hardware with a generalized parameter-shift rule (GPSR)~\cite{Wierichs2022grads, Izmaylov2021grads, Oleksandr2021grads}. In \Cref{alg:grad}, we outline the entire algorithm for our gradient estimation and we denote the GPSR subroutine with \texttt{gpsr}. An alternative to the generalized shift rule is to decompose the effective generators and apply the original two-term parameter-shift rule to the constituents (see App.~\ref{app:decompose_generators} for details).
In~\cite{Banchi2021measuringanalytic}, the authors proposed the so-called stochastic parameter-shift rule for multivariate gates, which is based on the Monte Carlo approximation of an operator identity.

In \Cref{fig:exact_grad} we consider a toy example using a random Hamiltonian on a single qubit and compare the exact derivative of an $\mathbb{SU}(2)$ gate with our generalized parameter-shift method (\Cref{alg:grad}), the stochastic parameter-shift rule and the central finite difference derivative with shifts $\pm\frac{\delta}2$. In particular, we consider the gate $U(\btheta)=\exp(ia X + ib Y)$ with $\btheta = (a,b)$ and compute the partial derivative with respect to $a$ over the range $a\in[0,\pi]$ for three fixed values of $b$ on a state vector simulator (without shot noise).
For the finite difference recipe we use $\delta=0.75$, which we found to be a reasonable choice for a shot budget of $100$ shots per cost function evaluation (see App. \ref{app:fd}).
We observe that the generalized $\SU{N}$ derivative reproduces the exact value while the finite difference derivative is slightly biased. This is to be expected because the latter is an approximate method.
While decreasing the shift size $\delta$ reduces the deterministic approximation error, it leads to larger overall estimation errors in shot-based computations like on quantum computers (see App. \ref{app:fd} and e.g.,~\cite{bittel2022fast}).
Finally, the stochastic parameter-shift rule yields an unbiased estimator for the exact derivative but has a finite variance, which we estimated using $100$ samples (see App. \ref{app:sps}). We stress that this variance is a property of the differentiation method itself and not due to sampling on the quantum computer.
All methods require two unique circuits per derivative but the stochastic shift rule needs additional circuits in order to suppress the variance. We provide the code for all our numerical experiments at~\cite{our_code}.

\begin{figure}[htb!]
    \centering
    \includegraphics[width=\columnwidth]{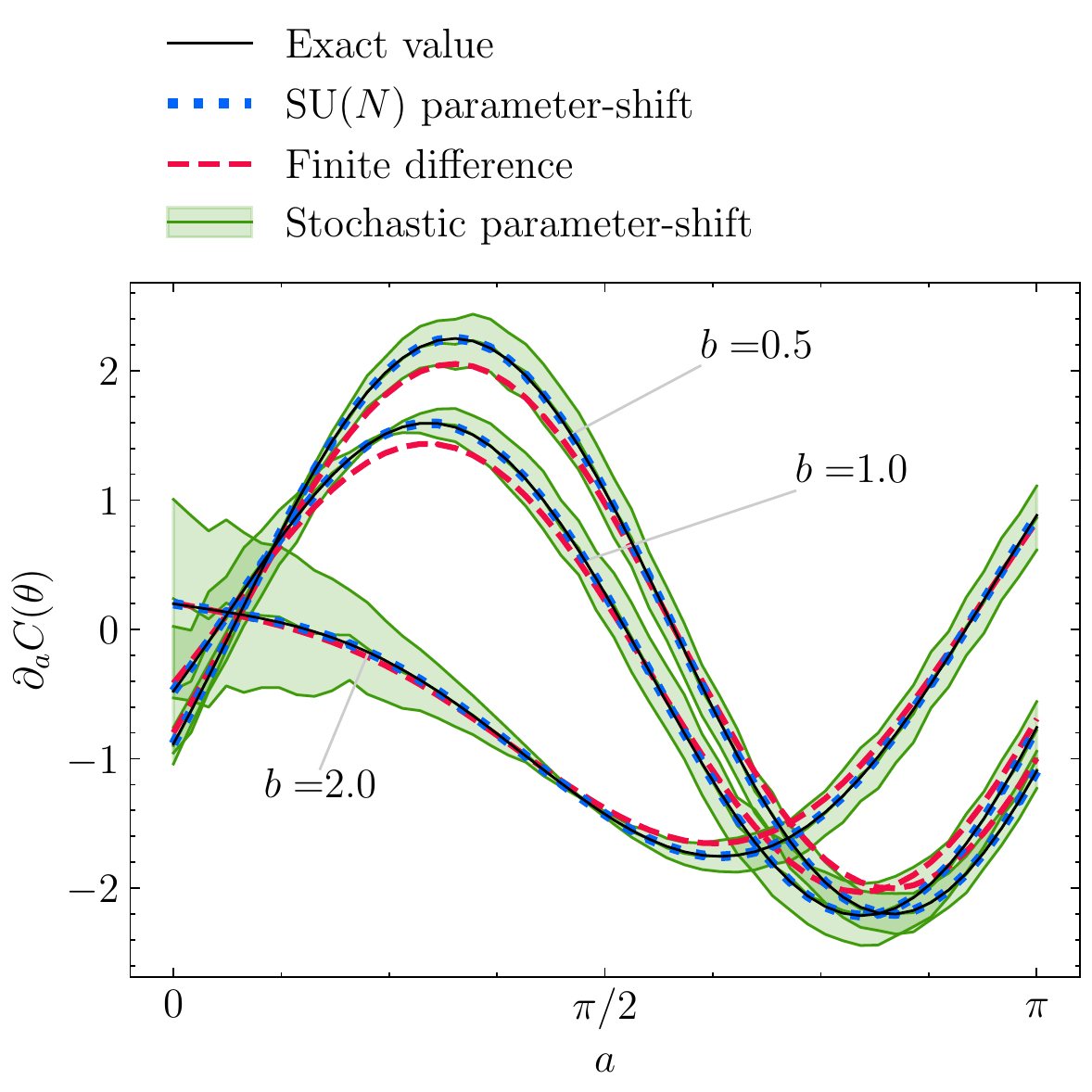}
    \caption{
    Gradients of $C(\btheta)$ for a single $\mathbb{SU}(2)$ gate and a random single-qubit Hamiltonian, in the limit of infinitely many shots on quantum hardware.
    We take $A(\btheta) = ia X + ib Y$ where $\btheta =(a,b)$ and consider the fixed values $b=0.5,1.0,2.0$ together with $a\in [0, \pi]$.
    Our generalized shift rule (dotted) reproduces the exact value (solid), whereas the central finite difference (dashed) is biased and the stochastic shift rule (solid, shaded) comes with a finite statistical error even without shot noise from the quantum measurements.
    Since we look at a single-qubit operation, $\Omega_a(\btheta)$ has a single spectral gap, so we require two shifted circuits to calculate the gradient entry (see App. \ref{app:gps} for details). The finite difference and the stochastic shift rule require two circuits as well, but additional executions are need for the latter to reduce the shown single-sample error.
    }
    \label{fig:exact_grad}
\end{figure}

In addition, we compare the three methods in the presence of shot noise in \Cref{fig:sampled_grad}. We show the means and \emph{single-shot} errors estimated with $1000$ shots, which we split over $100$ samples for the stochastic shift rule. We observe that the generalized $\SU{N}$ shift rule systematically performs best. It is not only unbiased but also has the smallest variance. 
Note that for smaller parameters $b$, the $\SU{N}$ shift rule and the stochastic shift rule show very similar variances.
This is because $U(\btheta)$ approaches the gate $R_X(a)=\exp(iaX)$, which can be differentiated with the original parameter-shift rule, and both rules indeed reduce to the two-term shift rule for $R_X$.

\begin{figure}[htb!]
    \centering
    \includegraphics[width=\columnwidth]{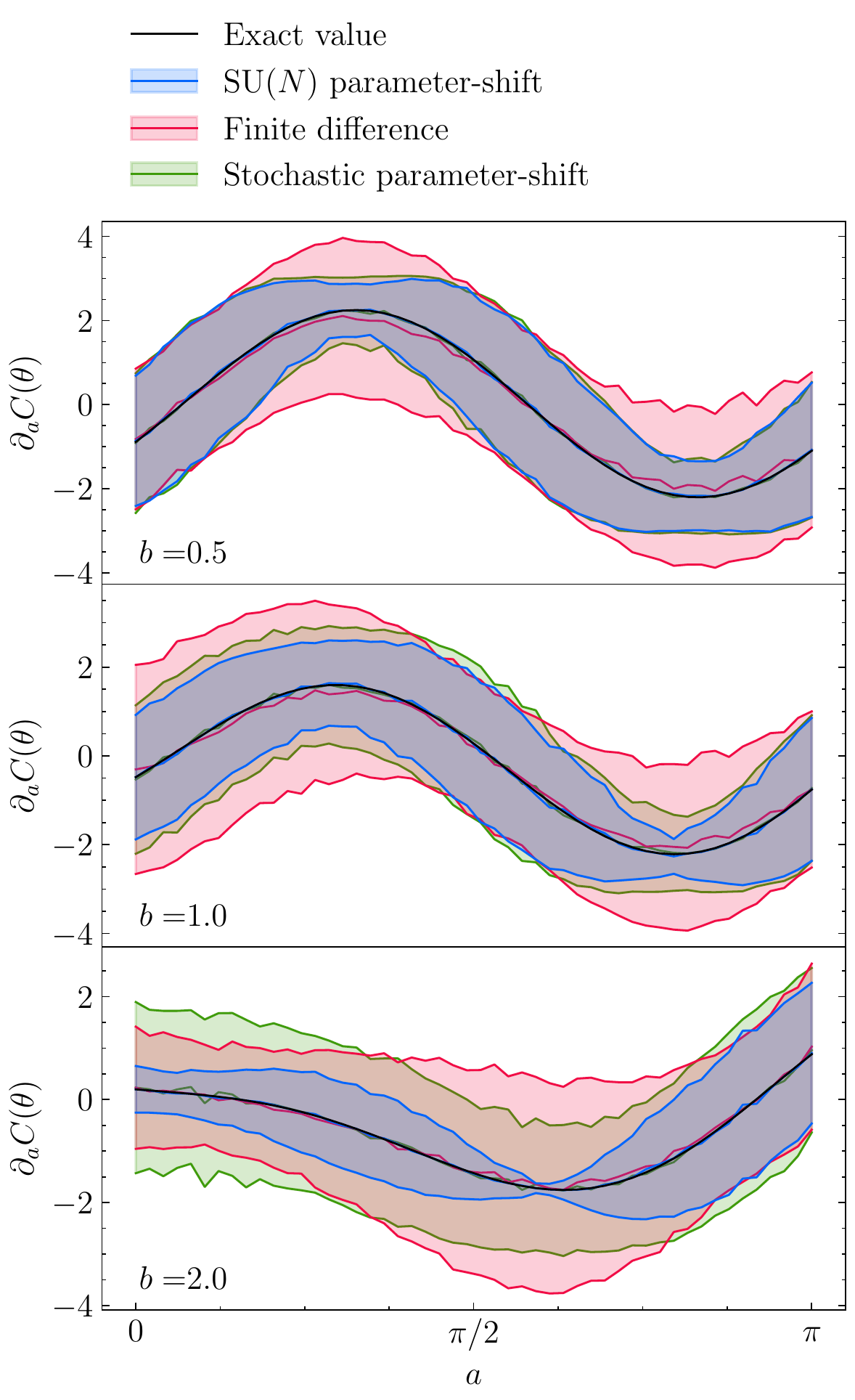}
    \caption{
    Gradients of $C(\btheta)$ as in \Cref{fig:exact_grad} but for finitely many shots on quantum hardware.
    We show the single-shot error for each method, estimated with $1000$ shots, which varies with the gate parameters as noted e.g.,~in~\cite{Oleksandr2021grads}.
    Our generalized $\SU{N}$ shift rule systematically outperforms the other methods.
    For small $b$, the $\SU{N}$ and the stochastic shift rule approach the single-parameter shift rule and hence behave similarly.
    The finite difference shift $\delta=0.75$ is chosen such that the bias and variance are traded off reasonably for $100$ shots (see App. \ref{app:sps} and e.g.,~\cite{bittel2022fast}).
    For other shot numbers, $\delta$ needs to be optimized anew, whereas the parameter-shift rules are known to perform optimally at fixed shifts.
    }
    \label{fig:sampled_grad}
\end{figure}

Finally, we note that \Cref{eq:circuit_with_omega} is closely related to the Riemannian gradient on $\SU{N}$ ~\cite{SchulteHerbruggen2010gradflow, Wiersema2022riemann}. However, instead of a gradient flow on a Lie group, we have defined a flow on the Lie algebra $\su{N}$, which we retract back to the manifold via the exponential map. This subtle difference induces a different flow from the $\SU{N}$ one, as we illustrate in App. \ref{app:riemann}.


\begin{algorithm}
\caption{$\SU{N}$ gradients.}\label{alg:grad}
\KwIn{$ U(\bx)$, $\rho$, $H$, $\btheta$}
\text{Obtain the Jacobian function:}\\
\For{$l \in (1,\ldots, N^2-1$)}{
$\partial_{x_l}U_l(\bx) = \partial_{x_l} \mathfrak{Re}[U(\bx)] + i \partial_{x_l}\mathfrak{Im}[U(\bx)]$
}
\text{For each gradient step:}\\
\For{$l \in (1,\ldots, N^2-1$)}{
    $\Omega_l(\btheta) \gets U^\dag(\btheta) dU_l(\bx)|_{\btheta}$\\
    $C(t) \gets \Tr{ U(\btheta)e^{t\Omega_l(\btheta)}\rho e^{-t\Omega_l(\btheta)} U^\dag(\btheta) H }$\\
    $\dpar{}{\theta_l} C(\btheta) \gets \texttt{gpsr}(\Omega_l(\btheta))$
}
\end{algorithm}

\section{\label{sec:speed_limits}Comparison with decomposed unitaries}
Previous parameterizations of $\SU{N}$ unitaries consist of products of single-qubit gates and CNOTs~\cite{ Khaneja2001cartan,kraus2001optimal,Vatan2004opt2qubit, vatan2004realization,Vartiainen2004, Dalessandro2006decompuni}. We refer to this parameterization as \emph{decomposed} $\mathbb{SU}(N)$ gates. On the other hand, \Cref{eq:gate} describes a general $\SU{N}$ unitary by exponentiating a parameterization of the Lie algebra $\su{N}$. Here, we investigate the effects of this alternative parameterization.

\subsection{\label{sec:gate_speed_limits}Gate speed limit}
First, we investigate a speed limit in terms of the gate time. 
We slightly modify the definition of~\Cref{eq:gate} for a unitary evolution of the system, $U(\btheta;t) \in \SU{N}$, to include a time $t\in \mathbb{R}^+$, 
\begin{align}
\label{eq:unitary_evolution_with_t}
    U(\btheta;t) = \exp{\bar{A}(\btheta)t},
\end{align}
where $\bar{A}(\btheta) = A(\btheta)/\sqrt{\Tr{A(\btheta)^\dagger A(\btheta)}}$ is a normalized time-independent Hamiltonian (the imaginary unit $i$ is included in $A(\btheta)$). 
The normalization of $\bar{A}(\btheta)$ is equivalent to the normalization of $\btheta$ in Euclidean norm, see Lemma~\ref{lemma:trace_of_square} in App.~\ref{app:gate_speed_limit}.
The normalization of the Hamiltonian (or, equivalently, $\btheta$) means that the total path length covered by the evolution is directly proportional to the evolution time $t$, since we are effectively setting the speed of the evolution to $1$.

The Lie group $\SU{N}$ can be turned into a Riemannian manifold by equipping it with the Hilbert-Schmidt inner product 
$g(x,y) = \Tr{x^\dagger y}$. 
The unitary evolution $U(\btheta;t)$, parameterized by $t$, is a one-parameter subgroup that gives the geodesic~\cite[Theorem III.6]{SchulteHerbruggen2010gradflow} from the identity element at time $t=0$. Geodesics can be defined as generalizations of straight lines in Euclidean geometry.
Using Lemma~\ref{lemma:path_distance} (App.~\ref{app:gate_speed_limit}), the length of the path after time $t$ is constant for time-independent normalized Hamiltonians with $\vert\btheta\vert = 1$, 
\begin{align}
    L[U(\btheta;t),t] =  \sqrt{N} t.
\end{align}
In general, there is more than one geodesic between two points on the manifold. For example, two points on the Bloch sphere can be connected by rotations about the same axis moving in opposite directions. Using Lemma~\ref{lemma:minimal_path} (App.~\ref{app:gate_speed_limit}), one of these geodesics must be the curve of the minimal path length. Hence, the minimum time to generate the evolution $U(\btheta;t_g)$ is $t_g$ along the geodesic of the minimal path.  
For an initial state $\rho$ and final state $\rho_f$, the Fubini-Study metric is used to find a minimum evolution time 
\begin{align}
    t_g = \frac{1}{\sqrt{N}}\arccos( \sqrt{\Tr{\rho \rho_f}}),
\end{align}
giving the Mandelstam-Tamm bound for time-independent normalized Hamiltonians.

In practice, we may only have access to a restricted family of gates within $\SU{N}$, for example due to hardware limitations, in which case we require a decomposition of a desired gate in $\SU{N}$ into gates from this family. Here we want to compute the additional evolution time required by such a decomposition. The simplest gate decomposition 
is to break the unitary into two terms, $U(\btheta;t_g) = U(\bphi^{(2)};t_2)U(\bphi^{(1)};t_1)$.
The parameters $\bphi^{(1)}$ and $\bphi^{(2)}$ are also normalized Hamiltonians, i.e., they have the norm $\vert \bphi^{(1)} \vert =\vert \bphi^{(2)} \vert = 1$. The following theorem shows that using a decomposed circuit over an $\mathbb{SU}(N)$ gate gives an additional evolution time, which corresponds to longer circuit run times.

\begin{theorem}
\label{th:gate_speed_limit}
	For unitary gates generated by normalized time-independent Hamiltonians, consider a general circuit decomposition of two gates $U(\bphi^{(2)};t_2)U(\bphi^{(1)};t_1)$.  There exists an equivalent evolution with an $\mathbb{SU}(N)$ gate $U(\btheta;t_g) =U(\bphi^{(2)};t_2)U(\bphi^{(1)};t_1)$, with evolution time $t_g$, such that 
	\begin{align*}
		t_g \leq t_1+t_2,
	\end{align*}
    with equality if $\bphi^{(1)} +\bphi^{(2)} =\btheta$.
\end{theorem}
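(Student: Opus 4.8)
The plan is to work entirely on the Lie group $\SU{N}$ equipped with the Hilbert–Schmidt metric and to exploit the fact, quoted above, that the normalized one-parameter subgroups $U(\bphi;t) = \exp{\bar A(\bphi) t}$ are unit-speed geodesics through the identity, with path length $L[U(\bphi;t),t] = \sqrt{N}\,t$. Concretely, I would let $g = U(\bphi^{(2)};t_2)U(\bphi^{(1)};t_1)$ be the target group element. Since $\SU{N}$ is connected and compact, the exponential map is surjective, so there exists $\btheta \in \R^{N^2-1}$ (equivalently, a skew-Hermitian $\bar A(\btheta)$ of unit norm) and a time $t_g \ge 0$ with $\exp{\bar A(\btheta) t_g} = g$; this is the claimed equivalent $\mathbb{SU}(N)$ gate. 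The content of the inequality $t_g \le t_1 + t_2$ is then that the \emph{geodesic} distance from $I$ to $g$ — which, up to the constant $\sqrt N$, equals $t_g$ when we take the minimal such representative — is no larger than the length of the particular path obtained by concatenating the two geodesic segments realizing $U(\bphi^{(1)};t_1)$ and then $U(\bphi^{(2)};t_2)$.

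The key steps, in order, are: (i) invoke Lemma~\ref{lemma:minimal_path} to identify $\sqrt{N}\,t_g$ with the minimal path length (Riemannian distance) between $I$ and $g$, so that $t_g$ is genuinely the shortest evolution time; (ii) construct an explicit, generally non-geodesic, path from $I$ to $g$ by first flowing along $s \mapsto U(\bphi^{(1)};s)$ for $s \in [0,t_1]$ and then along $s \mapsto U(\bphi^{(2)};s-t_1)\,U(\bphi^{(1)};t_1)$ for $s \in [t_1,t_1+t_2]$; (iii) compute the length of this concatenated path using Lemma~\ref{lemma:path_distance} — each segment is a left- or right-translate of a unit-speed geodesic, and since the Hilbert–Schmidt metric is bi-invariant on $\SU{N}$, translation preserves speed, so the two segments contribute $\sqrt{N}\,t_1$ and $\sqrt{N}\,t_2$ respectively, for a total of $\sqrt{N}(t_1+t_2)$; (iv) conclude $\sqrt{N}\,t_g \le \sqrt{N}(t_1 + t_2)$, i.e. $t_g \le t_1 + t_2$, since the distance is a lower bound for the length of \emph{any} connecting path. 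For the equality condition, observe that if $\bphi^{(1)} + \bphi^{(2)} = \btheta$ and, in addition, $[\bar A(\bphi^{(1)}), \bar A(\bphi^{(2)})] = 0$ — which is what makes $U(\bphi^{(2)};t_2)U(\bphi^{(1)};t_1)$ collapse to a single exponential $\exp{\bar A(\bphi^{(1)})t_1 + \bar A(\bphi^{(2)})t_2}$ — the concatenated path can be reparameterized to lie along the single geodesic generated by $\bar A(\btheta)$, so no length is wasted and $t_g = t_1 + t_2$; here one also uses that with the stated normalizations the additivity $\bphi^{(1)} + \bphi^{(2)} = \btheta$ forces the times to add consistently, which I would verify with Lemma~\ref{lemma:trace_of_square} relating $|\btheta|$ to $\Tr{A(\btheta)^\dagger A(\btheta)}$.

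I expect the main obstacle to be step (iii): carefully justifying that the length of the concatenated path is exactly $\sqrt{N}(t_1+t_2)$ rather than merely bounded by it requires the bi-invariance of the Hilbert–Schmidt metric so that $s \mapsto U(\bphi^{(2)};s)\,U(\bphi^{(1)};t_1)$ has the same speed as $s \mapsto U(\bphi^{(2)};s)$, and one must check that the two geodesic arcs meet without a ``kink'' that would, under a different metric convention, either help or hurt; since $\SU N$ with the bi-invariant metric has one-parameter subgroups as geodesics, translating a geodesic by a fixed group element yields another unit-speed curve, so this goes through, but the bookkeeping of left versus right translation (the decomposition writes the factors in a fixed order) is the place where an error could creep in. A secondary subtlety is ensuring $t_g$ is the \emph{minimal} time and not just \emph{a} time — this is exactly why Lemma~\ref{lemma:minimal_path} is needed, and without it the theorem would be vacuous. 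The equality analysis is the other delicate point: one must argue that $\bphi^{(1)}+\bphi^{(2)}=\btheta$ together with the unit-norm constraints is consistent only when the generators commute (or handle the general case via a direct length computation showing the concatenated path is already geodesic), so that the achievability of the bound is not over-claimed.
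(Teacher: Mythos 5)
Your proposal is correct and follows essentially the same route as the paper's proof: identify the product $U(\bphi^{(2)};t_2)U(\bphi^{(1)};t_1)$ with a point $V\in\SU{N}$, invoke Lemma~\ref{lemma:minimal_path} to obtain a minimal geodesic of the form $e^{\bar{A}(\btheta)t}$ reaching $V$ at time $t_g$, and bound $t_g$ by the length of the concatenated two-segment path via Lemma~\ref{lemma:path_distance}. You are in fact more careful than the paper at the one step it leaves implicit --- that right-translation by $U(\bphi^{(1)};t_1)$ preserves the speed of the second segment (bi-invariance of the Hilbert--Schmidt metric), so the concatenation has length exactly $\sqrt{N}(t_1+t_2)$ --- and your remark that the stated equality condition $\bphi^{(1)}+\bphi^{(2)}=\btheta$ only yields $t_g=t_1+t_2$ when the generators commute (so the product collapses to a single exponential) flags a genuine looseness that the paper's own proof also glosses over.
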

The proof of the theorem is in App.~\ref{app:gate_speed_limit}. As expected, a decomposition into two gates gives a longer total evolution time than is possible with an $\mathbb{SU}(N)$ gate due to the normalizations of $\bphi^{(1)}$,  $\bphi^{(2)}$, and $\btheta$. 
A decomposition into more gates would generally lead to an even greater evolution time. A corollary of Theorem~\ref{th:gate_speed_limit} is that any circuit with multiple non-commuting layers of gates cannot be optimal in total time.

\subsection{Unbiased cost landscapes}

An additional advantage of the $\mathbb{SU}(N)$ gate is that it weighs all optimization directions equally. In contrast, a parameterization of $\SU{N}$ in terms of a product of gates will create a bias in the parameter space. We illustrate this point with the following example.
Consider the decomposed $\mathbb{SU}(2)$ gate $V(\btheta) = R_Z(\theta_3)R_Y(\theta_2)R_Z(\theta_1)$ where $R_A(\theta) = \exp{i\theta A}$ and $A=X,Y,Z$. This is the ZYZ decomposition. Using similar techniques as in App. \ref{app:gate_speed_limit}, we can rewrite $V(\btheta)$ to be parameterized in terms of the Lie algebra:
\begin{align}
    V(\btheta) = \exp{i \bphi \cdot \bm{\sigma}},
\end{align}
where $\bm{\sigma} = (X, Y, Z)$ and
\begin{align}
    \bphi = \frac{\arccos(\cos(\theta_2)\cos(\theta_1 + \theta_3))}{\sqrt{1-\cos^2(\theta_2)\cos^2(\theta_1 + \theta_3)}}\nonumber\\
    \times
    \begin{pmatrix}
        \sin(\theta_2)\sin(\theta_1 - \theta_3) \\
        \sin(\theta_2)\cos(\theta_1 - \theta_3) \\
        \cos(\theta_2)\sin(\theta_1 + \theta_3) 
    \end{pmatrix}.
\end{align}
If we look at the components of $\bphi$, we see that the different directions in the Lie algebra are stretched or compressed as a result of the particular choice of parameterization. Consider the normalization $\vert\theta_1\vert + \vert\theta_2\vert + \vert\theta_3\vert = 1$ for the ZYZ decomposition and $|\btheta|=1$ for the $\mathbb{SU}(N)$ gate. With each Hamiltonian term normalized to 1, the prefactor gives the evolution time. 
These choices of norm give equal total evolution times for the ZYZ decomposition and $\mathbb{SU}(2)$ gate, $T_{\textrm{ZYZ}} = T_{\SU{N}} = \sqrt{2}$, irrespective of the specific parameters chosen. In \Cref{fig:bias_comparison_ZYZ_SU2}, we graphically illustrate the Lie algebra deformation by showing the $\bphi$ surface for both the ZYZ decomposition and $\mathbb{SU}(2)$ gate. Note that we have not considered any cost function here; the bias occurs at the level of the parameterization of a specific unitary.
\begin{figure}[htb!]
    \centering
    \includegraphics[scale=0.6]{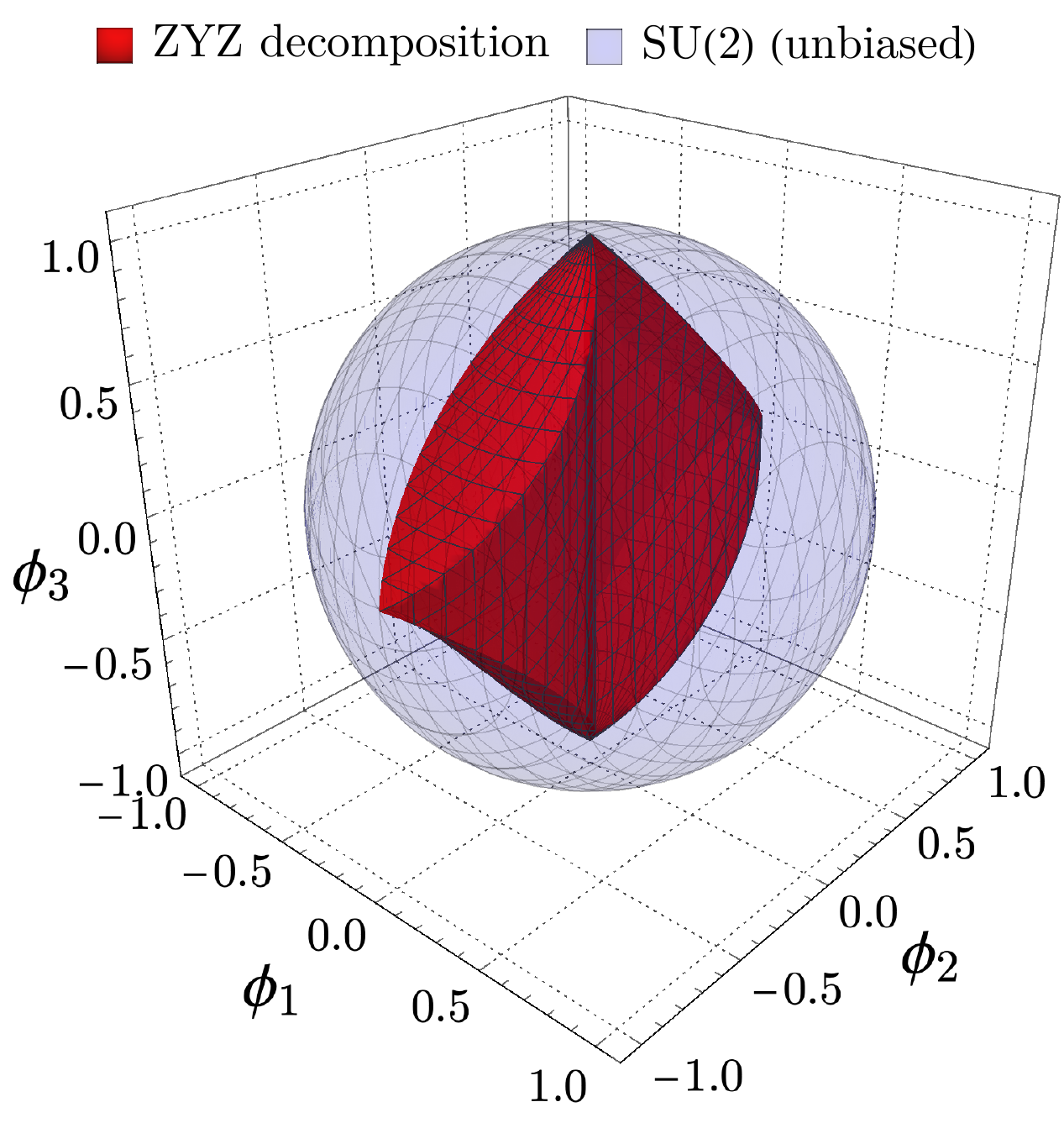}
    \caption{The total unitary evolution for the ZYZ decomposition (red) and the $\mathbb{SU}(2)$ gate (blue) can be expressed in the form $\exp{i \bm{\phi} \cdot \bm{\sigma}}$. The components $\bm{\phi} = (\phi_1, \phi_2, \phi_3)$ give the magnitude of the respective basis generators $\bm{\sigma} = (X, Y, Z)$. The original parameterization in $\btheta$ with norm $\vert\theta_1\vert + \vert\theta_2\vert + \vert\theta_3\vert = 1$ gives a surface of possible values of $\bphi$ and therefore possible unitary evolutions. The $\mathbb{SU}(2)$ gate (blue) is unbiased because its parameterization gives the correspondence $\btheta = \bphi$ with normalization $\bphi_1^2 + \bphi_2^2 + \bphi_3^2 = 1$. The unitary evolution for the ZYZ decomposition (red) is biased because the surface in the $\bphi$ coordinates does not maintain an equal magnitude in all directions.}
    \label{fig:bias_comparison_ZYZ_SU2}
\end{figure}

The effect of this bias is demonstrated in \Cref{fig:geodesic_learning} for the simplest case of a single-qubit system with an $\mathbb{SU}(2)$ gate. The optimal parameters of the circuit are those that produce the state that gives the minimum of the cost function $C(\btheta) = - \langle Y \rangle$ (green star). We consider various initial parameters acting on the reference state $\rho=|0\rangle\langle 0|$. The corresponding training paths are shown for each initial parameter vector. The training paths for the decomposed ZYZ circuit are depicted in \Cref{fig:geodesic_learning}(a). As the initial parameter $\btheta_0$ acting on the reference state $\rho$ (purple dots) moves closer to an unstable equilibrium point (orange diamond) the training path becomes increasingly suboptimal. At the unstable equilibrium the only gradient information is directly away from the instability rather than providing information about the direction towards the global minimum. This behavior is further illustrated by the gradient vector field on the Bloch sphere in \Cref{fig:geodesic_learning}(c). For the $\mathbb{SU}(N)$ gate, we see in \Cref{fig:geodesic_learning}(b) that the optimization trajectories follow a direct path to the minimum.

\begin{figure*}
    \centering
    \includegraphics[scale=0.247]{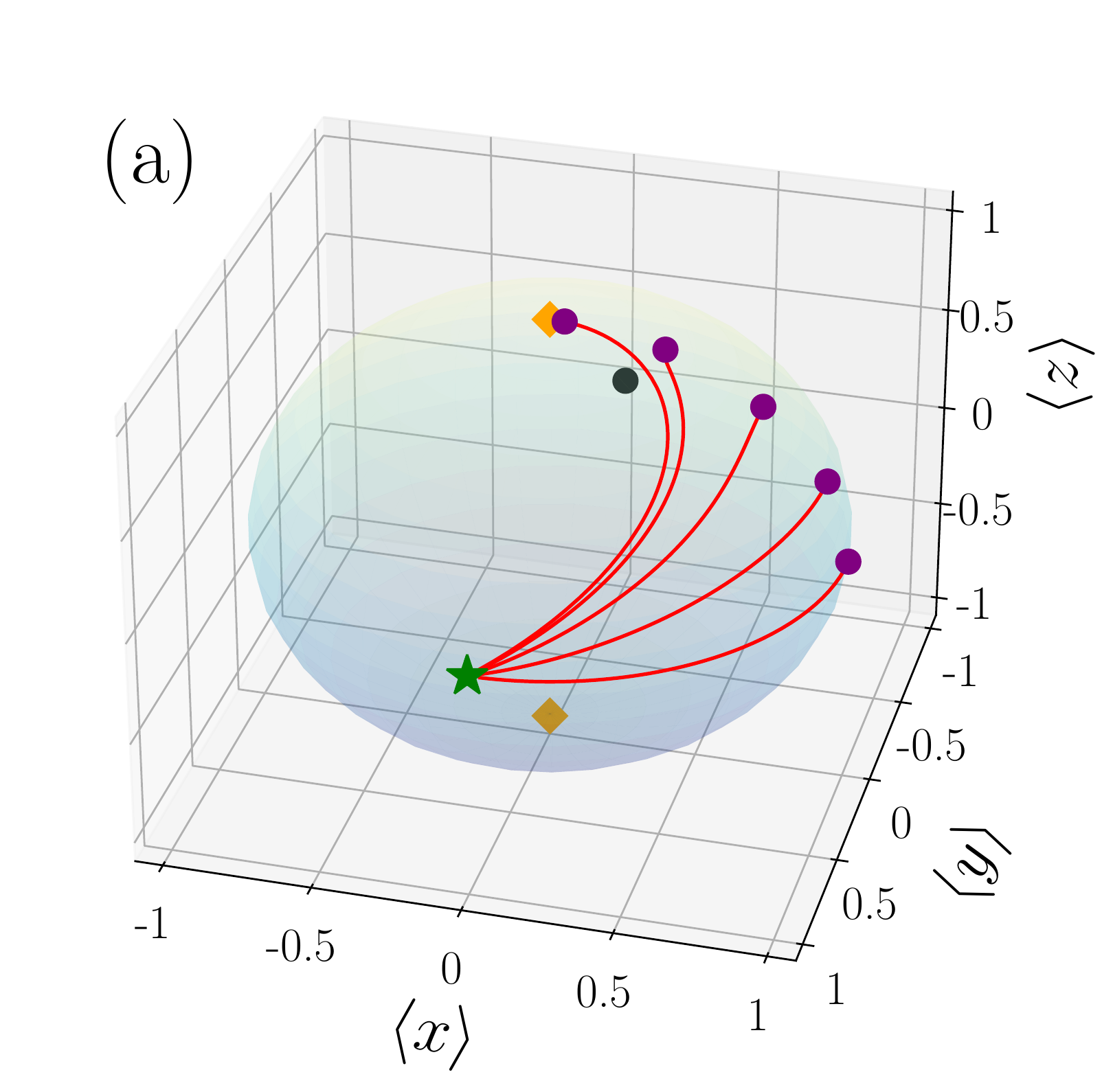}   \includegraphics[scale=0.247]{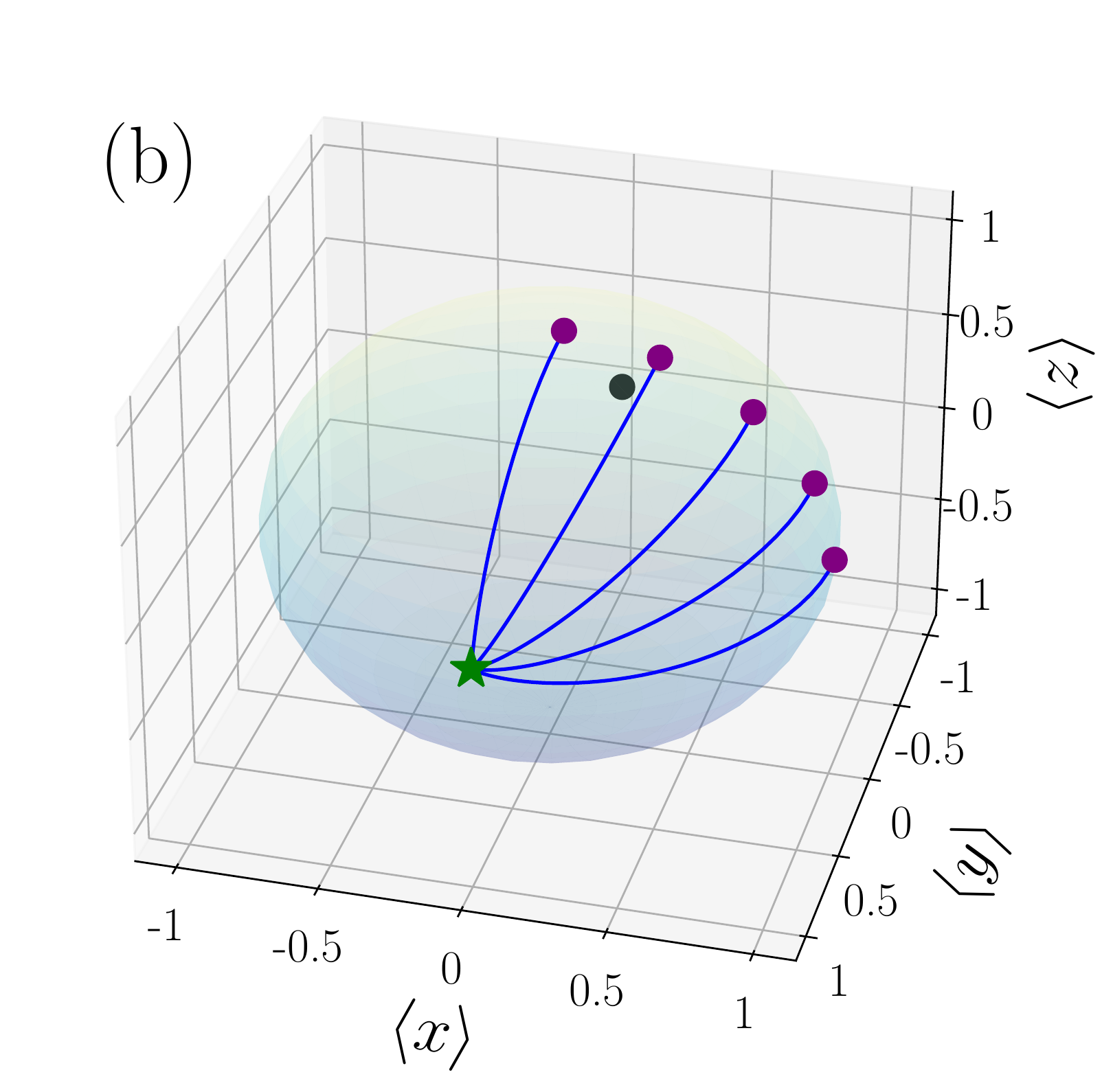}
    \includegraphics[scale=0.247]{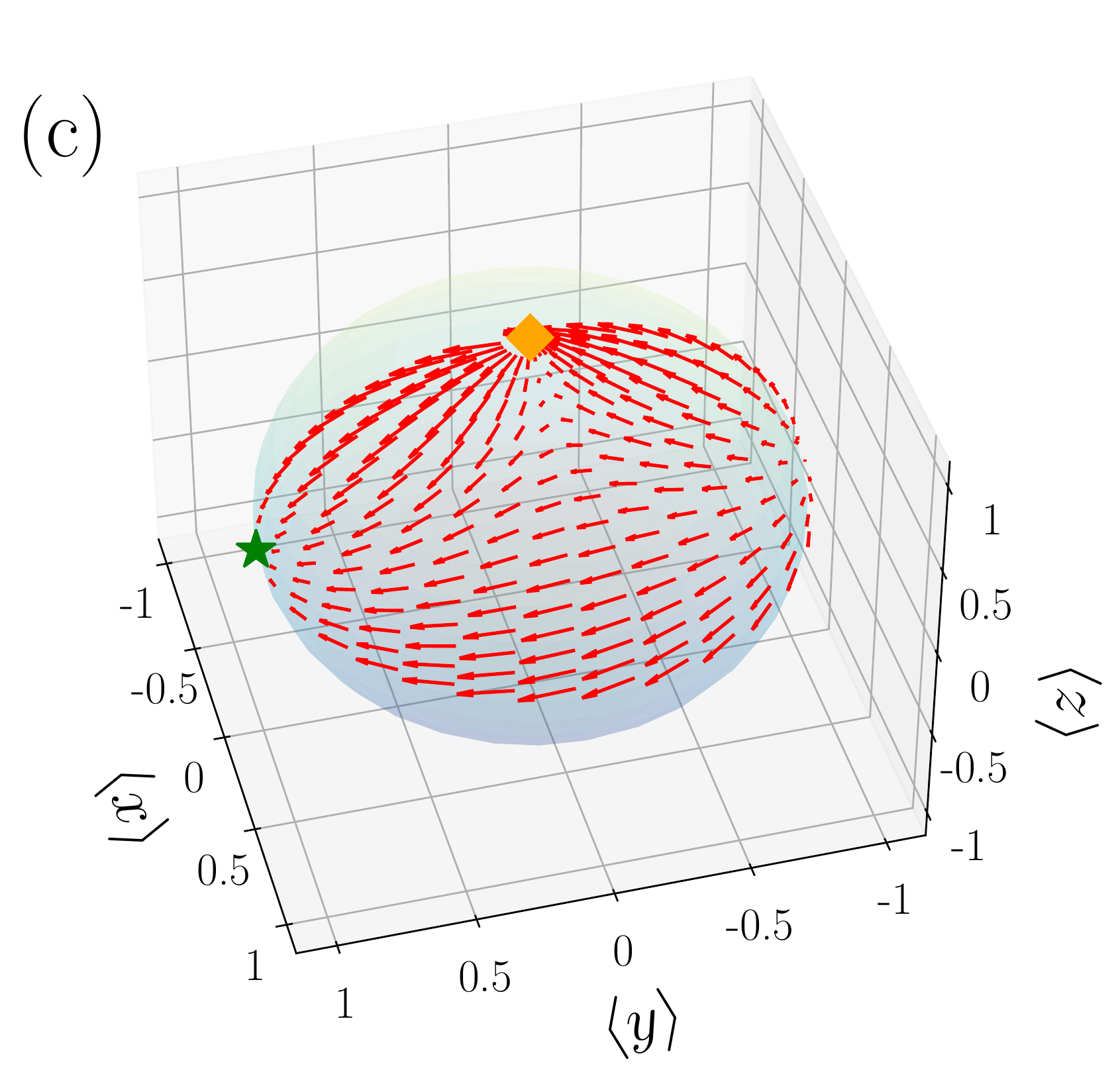}
    \includegraphics[scale=0.247]{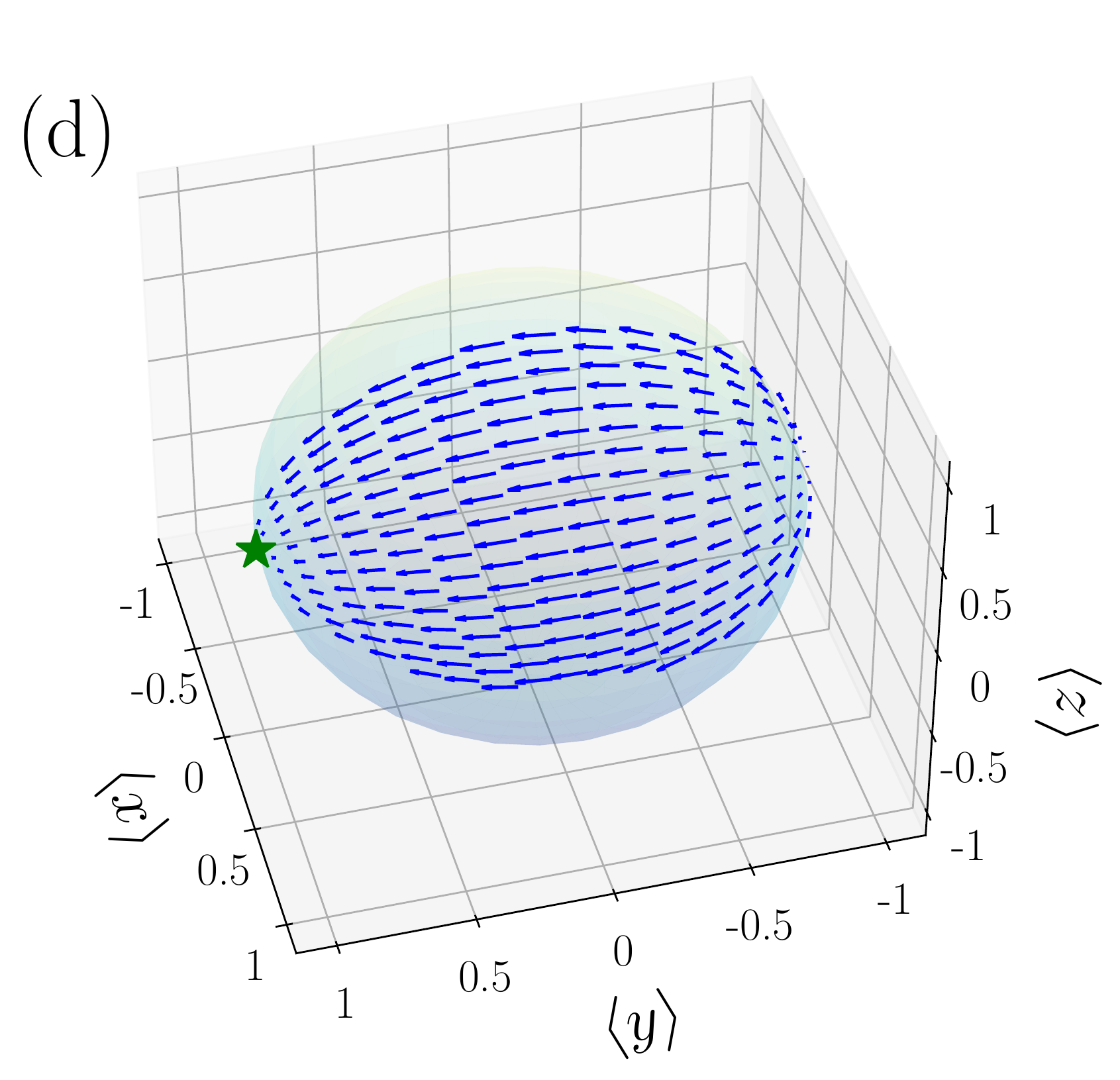}
    \caption{Comparison of the update of circuit parameters from various initial parameters acting on the initial state $\rho = |0\rangle\langle0|$. The training paths are depicted on the Bloch sphere for: (a) parameterized single-qubit rotations for the ZYZ ansatz; and (b) using the $\mathbb{SU}(N)$ gate. The purple dots represent initial states generated by applying $U(\btheta_0)$ with $\btheta_0 = (0, a, 0)$ where $a \in  \left\{ \frac{\pi}{64}, \frac{\pi}{8}, \frac{2 \pi}{8}, \frac{3 \pi}{8}, \frac{\pi}{2} \right\}$ to $\rho$. Note that for this choice of initial parameters, $U(\btheta_0) = V(\btheta_0)$. The objective function is $C(\btheta) = - \langle Y \rangle$, giving the target final state at the green star---the state that gives the global minimum of $C(\btheta)$. The unstable equilibrium points are given by orange diamonds, at $(0,0,1)$ and $(0,0,-1)$, and the black point is at the maximum of the cost function, $(0,1,0)$. (c) shows the gradient vector field of the decomposed ZYZ ansatz. The vector field for the $\mathbb{SU}(2)$ gate, shown in (d), coincides with the geodesic flow towards the target final state at all points which satisfies the gate speed limit.}
    \label{fig:geodesic_learning}
\end{figure*}

\subsection{Numerical experiment}

To investigate the effect on the performance of a typical optimization, we study how an $\mathbb{SU}(N)$ gate compares with a decomposed gate in a larger circuit. In \Cref{fig:average} we provide a non-trivial example, where we incorporate our gates into a circuit and show that it performs better than a decomposed $\mathbb{SU}(4)$ gate on a set of random problem instances. We show the individual optimization trajectories in \Cref{fig:trajectories} which illustrate the faster optimization of $\mathbb{SU}(N)$ gates compared to decomposed gates. Like for the examples in \Cref{fig:exact_grad} and \Cref{fig:sampled_grad}, we assume that there is no gate or measurement noise. Additionally, we assume that we can always implement the gate generated by $\Omega_l(\btheta)$, and have control over all Pauli operators $G_m$. In practice, we typically only have access to a fixed set of generators $\mathrm{span}(\{G_m\}) < \mathrm{span}(\su{N})$. If this is the case, then we require a decomposition of $\exp{t \Omega_l(\btheta)}$ in terms of the available operations on the device~\cite{Khaneja2001cartan, Dalessandro2006decompuni}. All numerical results were obtained with \texttt{PennyLane}~\cite{bergholm2018pennylane}, and the $\mathbb{SU}(N)$ gates can be accessed via the \texttt{qml.SpecialUnitary} class.
Although we do not explore this here, one could make use of sparse optimization methods such as stochastic optimization~\cite{sweke2020stochastic, harrow2021low} and frugal optimization~\cite{arrasmith2020operator} for the GPSR subroutine in our algorithm.

\section{Resource estimation}\label{sec:resource}

To obtain the partial derivative in \Cref{eq:circuit_with_omega} in practice we need to estimate the gradient of a circuit that contains a gate generated by $\Omega_l(\btheta)$. As noted in recent works on GPSR rules~\cite{Wierichs2022grads, Izmaylov2021grads, Oleksandr2021grads}, the computational cost of estimating this gradient is related to the spectral gaps of $\Omega_l(\btheta)$. In particular, if $\{\lambda_j \}$ is the set of (possibly degenerate) eigenvalues of $\Omega_l(\btheta)$, we define the set of unique spectral gaps as $\Gamma =\{\abs{\lambda_j - \lambda_{j'}}\}$ where $j'>j$. Note that for $d$ distinct eigenvalues, the number of unique spectral gaps $R$ is at most $R\leq d(d-1)/2$. The total number of parameter-shifted circuits is then $2R$ for a single partial derivative $\partial_{\theta_l} C(\btheta)$

Depending on the generator $\Omega_l(\btheta)$, this complexity can be improved. For instance, in~\cite{Izmaylov2021grads}, a Cartan decomposition is used to improve the number of circuits required from polynomial to linear or even logarithmic in $N$. Additionally, in~\cite{Wierichs2022grads}, the different costs for first- and second-order gradients are determined for specific varational quantum algorithms like QAOA~\cite{farhi2014quantum} and RotoSolve~\cite{vidal2018rotosolve, parrish2019rotosolve, nakanishi2020rotosolve, Ostaszewski2021rotosolve}. Finally, in~\cite{Oleksandr2021grads}, the computational cost of a variety of different gates is investigated in detail and the variance across the parameter regime is studied.

\begin{figure}[htb!]
    \centering
    \includegraphics[width=\columnwidth]{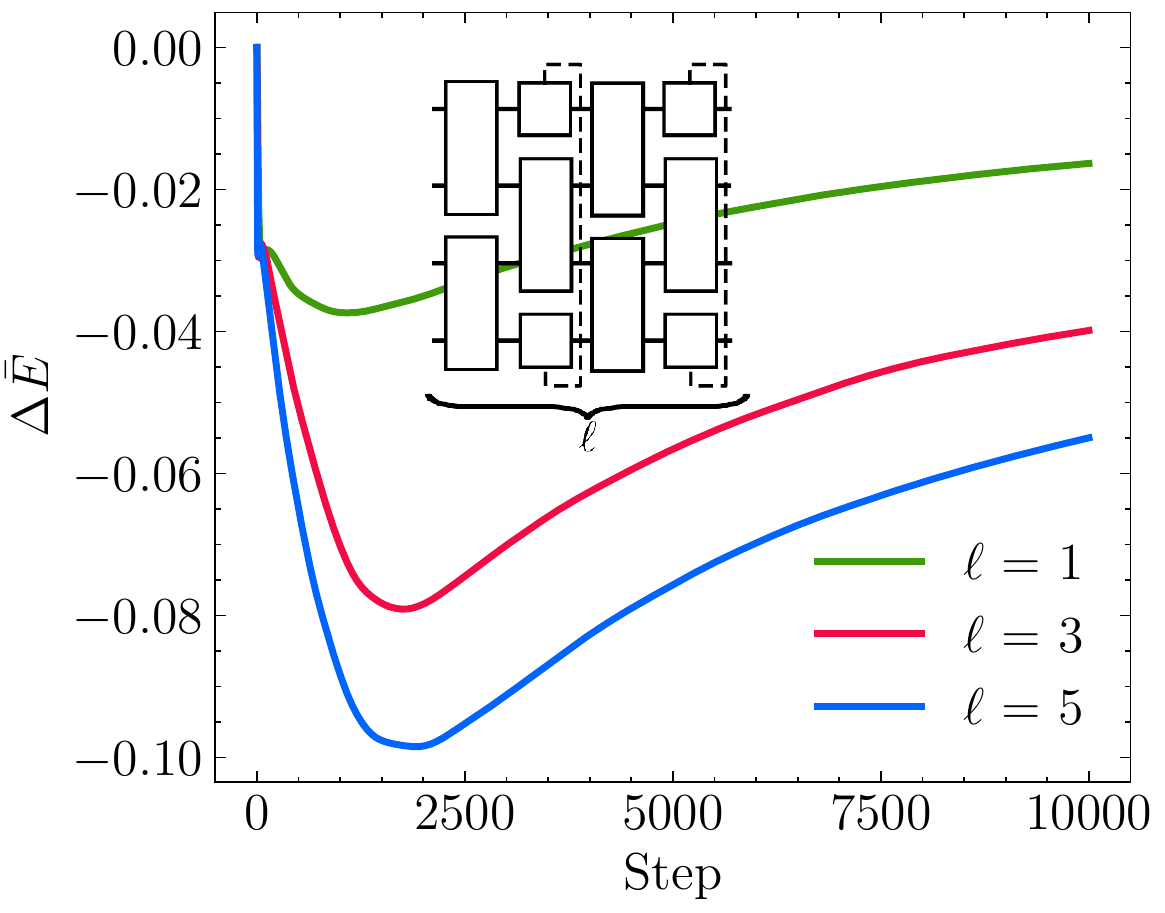}
    \caption{Comparison of decomposed gates versus $\mathbb{SU}(N)$ gates in brick-layer circuits for random 10-qubit Hamiltonians and various depths. We consider the brick-layer circuit indicated  with $\ell=2$ in the inset, with general two-qubit gates acting on the even and odd qubits in each layer. The decomposed gate is the $\SU{4}$ parameterization of~\cite{Vatan2004opt2qubit}, which is optimal in the number of CNOTs required. For each instance, we sample a Hamiltonian from the Gaussian unitary ensemble
    and minimize the cost in \Cref{eq:cost} via standard gradient descent.
    We show the difference of the relative errors in energy $\Bar{E} = (E - E_{\min}) / (E_{\max} - E_{\min})$ between the decomposed gates and the $\mathbb{SU}(N)$ gates, that is $\Delta \bar{E} = \Bar{E}_{\SU{N}} - \Bar{E}_{\mathrm{Decomp.}}$
    The plotted lines are the mean $\Bar{E}$, averaged over 50 random Hamiltonians for each circuit depth $\ell$. We see that for all depths  $\Delta\Bar{E} < 0$ at all points during the optimization, hence the brick-layer circuit with the $\mathbb{SU}(N)$ gates outperforms the circuit where the two-qubit interactions are parametrized as a gate composition.}
    \label{fig:average}
\end{figure}

\begin{figure}[htb!]
    \centering
    \includegraphics[width=\columnwidth]{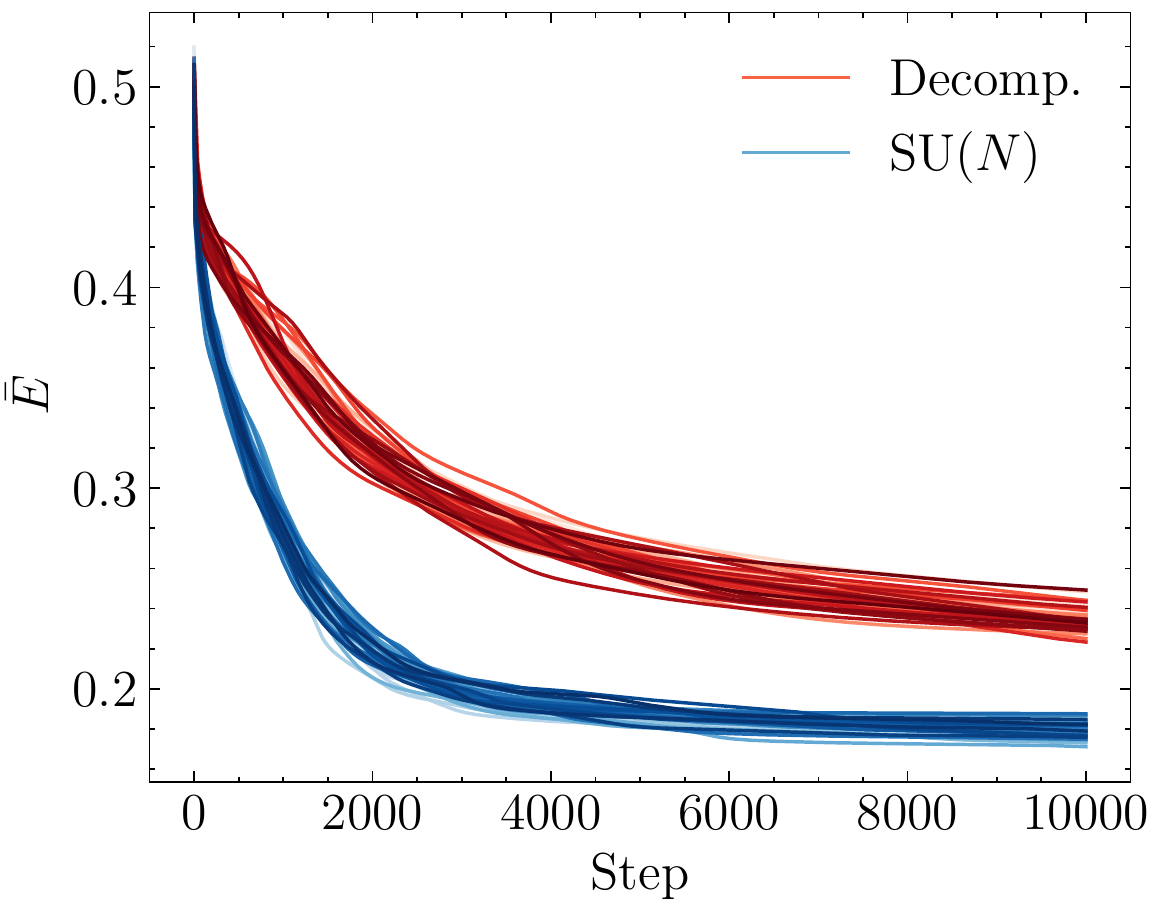}
    \caption{Trajectories from the optimizations in \Cref{fig:average} for 50 random 10-qubit Hamiltonians sampled from the Gaussian unitary ensemble and an $\ell=5$ brick-layer circuit of 2-qubit building blocks. We compare the relative error energy (see \Cref{fig:average} for the definition of $\Bar{E}$) when using a standard gate composition to that when using $\mathbb{SU}(4)$ gates as building blocks.
    The optimization is performed with vanilla gradient descent using a learning rate of $\eta=10^{-3}$.
    The $\mathbb{SU}(4)$ gate consistently leads to faster optimization and better approximations of the ground state energy throughout all $10^5$ optimization steps.}
    \label{fig:trajectories}
\end{figure}

Instead of focusing on specific instances of the generator $\Omega_l(\btheta)$, we make a more general observation about the computational complexity of parameter-shift gradient rules. In general, $\Omega_l(\btheta)$ has full support on $\su{N}$, since the consecutive applications of $\ad{A(\btheta)}$ in \Cref{eq:adj} typically generate all of $\su{N}$~\cite{lloyd1996universal}. However, for specific choices of $A(\btheta)$, the application of $\ad{A(\btheta)^p}$ to $\partial_{ \theta_l} A(\btheta)$ closes to form a subalgebra, called the dynamical Lie algebra of $A(\btheta)$, that is contained in $\su{N}$. These algebras are well-known in the context of quantum optimal control~\cite{Albertini2001dynlie, dalessandro2021}, and have recently been studied in the context of variational quantum algorithms~\cite{larocca2022diagnosing, larocca2021theory}. We define the dynamical Lie algebra (DLA) $\mathcal{L}(A(\btheta))$ as the subalgebra formed under the closure of the non-zero terms in $A(\btheta)$ under the commutator. Ignoring global phases, this will always result in a subalgebra $\mathcal{L}(A(\btheta))\subseteq \su{N}$.
For example, given $A(\btheta) = i(aX + bY)$, $\forall a,b \in\mathbb{R}$, we have $\mathcal{L}(A(\btheta)) = \mathrm{span}\{iX,iY,iZ\}$, since $\ad{X}(Y)=\comm{X}{Y} = iZ$ and successive commutators generate no new contributions. Note that for this example the DLA equals the full Lie algebra $\su{2}$. Explicit constructions of DLAs that span $\mathfrak{so}(N)$ and $\mathfrak{sp}(N)$ are given in~\cite{Schirmer2002dynlie}. In a more recent work, the DLAs of several typical quantum many-body Hamiltonians are studied and their properties are used to prepare efficient time-evolution circuits~\cite{Kokcu2021cartan}. 
In one dimension, the DLAs that are generated by Pauli strings have recently been classified~\cite{wiersema2023classification}.

Interestingly, if the DLA is maximal, i.e., there exists no smaller non-trivial subalgebra within $\mathcal{L}(A(\btheta))$, then the \emph{roots} of the Lie algebra can be related directly to the computational cost of estimating the gradients in \Cref{eq:circuit_with_omega}. We formally establish this connection with the following theorem:

\begin{theorem}\label{th:spectral}
The number of unique spectral gaps $R$ of $\Omega_l(\btheta)$ is upper bounded by the number of roots $|\Phi|$ of any maximal semi-simple DLA,
\begin{align}
    R \leq |\Phi| /2.
\end{align}
\end{theorem}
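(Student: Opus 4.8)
The plan is to connect the spectral gaps of the effective generator $\Omega_l(\btheta)$ to the root system of the DLA via the adjoint (Cartan) decomposition. First I would observe that $\Omega_l(\btheta)\in\mathcal{L}(A(\btheta))$, since by \Cref{eq:u_dtheta} it is a convergent sum of nested commutators of $A(\btheta)$ with $G_l$, and $G_l$ together with $A(\btheta)$ lie in the DLA, which is closed under the bracket. So the eigenvalues of $\Omega_l(\btheta)$ — equivalently, the eigenvalues of $\ad{\Omega_l(\btheta)}$ restricted to the action we care about — are controlled by the structure of $\mathcal{L}(A(\btheta))$. Assuming the DLA is maximal and semisimple with root system $\Phi$, I would fix a Cartan subalgebra $\mathfrak{h}$ and, using the fact that any element is $\mathrm{Ad}$-conjugate into (the closure of) a Cartan subalgebra, assume without loss of generality that $\Omega_l(\btheta)$ lies in $\mathfrak{h}$ (conjugation by a group element does not change the spectrum, hence not the set of spectral gaps).

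Next I would invoke the root-space decomposition $\mathfrak{g}_{\mathbb{C}} = \mathfrak{h}_{\mathbb{C}} \oplus \bigoplus_{\alpha\in\Phi}\mathfrak{g}_\alpha$: for $H\in\mathfrak{h}$, the operator $\ad{H}$ acts as $0$ on $\mathfrak{h}_{\mathbb{C}}$ and as the scalar $\alpha(H)$ on each root space $\mathfrak{g}_\alpha$. In a faithful representation (the one in which the gate acts), the eigenvalues of $\Omega_l(\btheta)$ take the form of the weights evaluated on $H=\Omega_l(\btheta)$; the \emph{differences} of eigenvalues are then exactly values $\alpha(H)$ for $\alpha$ a difference of weights, and for a maximal/simple algebra these weight-differences are spanned by (and realized among) the roots. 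Since $\Phi = -\Phi$, the roots come in $\pm$ pairs, so the number of \emph{unique} gaps $|\{\alpha(H)\}|$ is bounded by $|\Phi|/2$, possibly with further collapse if $H$ is not regular. Collecting: the number of distinct nonzero values among $\{\lambda_j-\lambda_{j'}\}$ is at most the number of distinct values $\{\alpha(\Omega_l(\btheta)) : \alpha\in\Phi\}$, which is at most $|\Phi|/2$. This gives $R\le|\Phi|/2$.

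The main obstacle I expect is the passage from eigenvalues of $\Omega_l(\btheta)$ \emph{as a matrix in the defining representation} to the abstract root data, i.e., making precise that the relevant gaps are differences of \emph{weights} of the representation and bounding those by roots. In general, weight differences need not be roots (they lie in the root lattice but can be sums of several roots), so the clean bound $|\Phi|/2$ really uses that the DLA is maximal — one must argue that for a maximal (simple) subalgebra acting irreducibly, the achievable set of distinct gaps is controlled by $\Phi$ itself, not the full root lattice. I would handle this by reducing to irreducible components and using that for the relevant maximal algebras the defining representation is minuscule or small enough that weight differences that actually occur are roots (or by arguing directly that extra weight-differences coincide numerically with root values for generic $\btheta$, and only fewer for special $\btheta$). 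The remaining steps — reduction to the Cartan subalgebra, the root-space eigenvalue computation, and the $\pm\alpha$ pairing — are standard structure theory and should be routine.
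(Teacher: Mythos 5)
Your proposal follows essentially the same route as the paper's proof: place $\Omega_l(\btheta)$ in the DLA, conjugate it into a Cartan subalgebra $\mathfrak{h}$, invoke the root-space decomposition so that $\ad{h}$ acts by the scalars $\alpha(h)$, identify the spectral gaps with values of roots, and halve the count because $\Phi=-\Phi$. The one substantive difference is how the passage from eigenvalue differences in the defining (Pauli) representation to the abstract root data is treated. The paper makes this step concrete: writing $V^\dagger\Omega V=h$ with $V$ the eigenvector matrix, it sets $e_{nm}=V^\dagger E_{nm}V$, computes $\ad{h}(e_{nm})=(\lambda_n-\lambda_m)e_{nm}$, identifies the set $\{\lambda_n-\lambda_m\}$ with $\Phi$, and then counts $|\Phi|=\dim\mathfrak{g}-\dim\mathfrak{h}$ using the one-dimensionality of root spaces. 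You instead flag this passage as the main obstacle, correctly observing that the $\lambda_n-\lambda_m$ are differences of \emph{weights} of the defining representation, which in general lie in the root lattice but need not be roots of the DLA itself; the identification is automatic only when $\mathfrak{g}=\su{N}$ (type $A_{N-1}$), since the $e_{nm}$ are root vectors of $\mathfrak{sl}(N,\mathbb{C})$ rather than of a proper subalgebra $\mathfrak{g}$. So you have located, rather than missed, the one non-routine step — the paper resolves it by the explicit $e_{nm}$ construction and the maximality assumption, whereas your proposed repairs (minuscule representations, reduction to irreducible components, genericity in $\btheta$) are left as sketches. If you carry out the $e_{nm}$ computation as above, your argument reproduces the paper's; if you want the bound to be airtight for proper maximal DLAs, the weight-difference-versus-root issue you raise is exactly the point that needs the extra care.
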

We provide the proof of \Cref{th:spectral} in App. \ref{app:th4}. We make use of the fact that any semisimple Lie algebra can be written as a direct sum of its weight spaces, which can be identified with its root system~\cite{Serre2000complex}. The number of roots $|\Phi|$ can then be used to bound the total number of unique spectral gaps of $\Omega_l(\btheta)$. We can thus use \Cref{th:spectral} to assess the run time of \Cref{alg:grad}. We give several examples of $\mathbb{SU}(N)$ gates in App. \ref{app:th4} together with the corresponding values of $R$. 
Depending on the physical system or hardware that we are working with, we have to choose a representation for $\su{N}$, which is a map $\su{N}\to\mathfrak{gl}(N, \mathbb{C})$. In \Cref{eq:paulis} we chose this representation to be the tensor product of the fundamental representation, i.e., Pauli monomials. Note however, that \Cref{eq:circuit_with_omega} and \Cref{th:spectral} hold for any irreducible representation of $\su{N}$. 

Additionally, by connecting the spectral gaps to the root system of the DLA, we can make use of a beautiful result in representation theory: the classification of all maximal subalgebras of the classical Lie algebras~\cite{dynkin1957american}. Each root system can be uniquely identified with a particular subalgebra of a Lie algebra and it can be shown that there exist a finite number of root systems. Since a DLA is a subalgebra of $\su{N}$, we can identify all possible DLAs and by extension all possible families of $\mathbb{SU}(N)$ gates. We provide examples of this procedure in App. \ref{app:th4}.

\section{Conclusion}

We have proposed an alternative parameterization of general $\mathbb{SU}(N)$ gates and a method of optimizing these gates in prototypical variational quantum algorithms. We have shown that our gates are more powerful in toy example settings, and motivated why we believe this is the case based on quantum speed-limit arguments. A natural extension of our work would be to test our method in experimental settings, both on gate-based quantum computers or quantum simulators~\cite{Georgescu2014qsim, ebadi2021quantum, Scholl2022xxz}. With regards to the latter, several methods have been investigated that could provide pulse-level optimization of energy cost functions~\cite{ibrahim2022pulse, meitei2020gate}. This would obviate the need for a gate-based model of quantum computing to prepare specific states on quantum hardware. Instead, we work on the Hamiltonian level and control the system directly. Our algorithm could be applied to this setting as well, since we're effectively learning the parameters of some fixed Hamiltonian.

We have shown that the $\mathbb{SU}(N)$ gate in a circuit outperforms a decomposed gate. The number of parameters in our proposed gate equals $4^{N_{\mathrm{qubits}}}$, hence $\mathbb{SU}(N)$ gates acting on a large number of qubits will be impractical. Additionally, it is not clear for which problems one would rather have a deeper circuit with simple gates as opposed to a shallow circuit with more powerful gates. This also begs another question: will our gates suffer from barren plateaus~\cite{mcclean2018barren}? It is likely that a circuit of 2-qubit $\mathbb{SU}(N)$ gates that has linear depth in $N$ will lead to a circuit that forms an approximate 2-design, which will suffer from vanishing gradients. However, appropriate choices of the generators $A(\btheta)$ of our gate could keep the circuit in a polynomially scaling DLA of the entire circuit, which can avoid barren plateaus~\cite{larocca2022diagnosing, larocca2021theory}. Additionally, we can consider parameter initialization strategies that can improve the optimization~\cite{grant2019initialization, skolik2021layerwise}.

Finally, we believe that the connections between variational quantum circuits and representation theory merit further investigation. We connected the classification of all $\mathbb{SU}(N)$ gates with the classification of semisimple Lie algebras. However, this could possibly be extended to a classification of all variational quantum circuits based on the DLA of an ansatz. It seems that the tools to provide such a classification are available and could provide one with a method to assess the trainability and expressivity of variational circuits without explicitly referring to specific ans\"atze.
\section{Acknowledgements}
We want to thank Los Alamos National Lab for their hospitality during the Quantum Computing Summer School where the initial stages of this project took place. RW wants to thank Lex Kemper and Efekan K{\"o}kc{\"u} for discussions on the subject of dynamical Lie algebras and Matt Duchenes for his suggestions with regards to the experimental implications of our work. DL acknowledges support from the EPSRC Centre for Doctoral Training in Delivering Quantum Technologies, grant ref. EP/S021582/1. JFCA acknowledges support from the Natural Sciences and Engineering Research Council
(NSERC), the Shared Hierarchical Academic Research Computing Network (SHARCNET), Compute Canada, and the Canadian Institute for Advanced Research (CIFAR) AI chair program. Resources used in preparing this research were provided, in part, by the Province of Ontario, the Government of Canada through CIFAR, and companies sponsoring the Vector Institute \url{https://vectorinstitute.ai/#partners}.

\bibliographystyle{quantumbib.bst}
\bibliography{library.bib}

\clearpage
\onecolumngrid
\onecolumn
\appendix

\renewcommand{\thesection}{\Alph{section}}
\renewcommand{\thefigure}{\Alph{section}\arabic{figure}}
\renewcommand{\thesubsection}{\arabic{subsection}}
\renewcommand{\theequation}{\Alph{section}\arabic{equation}}
\counterwithin*{equation}{section}
\counterwithin*{figure}{section}
\setcounter{section}{0}
\setcounter{subsection}{0}

\setcounter{equation}{0}
\setcounter{figure}{0}
\setcounter{table}{0}
\setcounter{page}{1}
\makeatletter

\section{\texorpdfstring{$\SU{N}$ }{sun} and its Lie algebra \label{app:sun}}
Consider the special unitary Lie group $\SU{N}$:
\begin{align}
    \SU{N} = \{ X \in \mathbb{C}^{N\times N} | X^\dag X = I,\, \det{X}=1\}.
\end{align}
In general, quantum gates belong to the unitary group $\mathrm{U}(N)$, which drops the determinant 1 condition and thus allows for an additional global phase. Restricting ourselves to $\SU{N}$ therefore is physically equivalent.
Consider a curve $X(t):\R \to \SU{N}$, $t \in (-1,1)$ such that $X(0) = I$ and $d/dt X(t) |_{t=0}\equiv \dot X(0) = \Omega$. If we differentiate the unitarity condition with respect to $t$, we obtain
\begin{align}
    \frac{d}{dt} (X^\dag(t)X(t)) \big|_{t=0}&= 0,\\
    \dot{X}^\dag(t)X(t) + X^\dag(t)\dot{X}(t) |_{t=0} &= 0\\
    \Omega^\dag + \Omega &=0,
\end{align}
and so we find that $\Omega^\dag = - \Omega$, i.e., $\Omega$ is a skew-Hermitian matrix. The matrices $\Omega$ from all such curves are elements of the Lie algebra 
\begin{align}
    \su{N} =\{\Omega\in\mathbb{C}^{N\times N} | \Omega^\dag=-\Omega,\, \Tr{\Omega}=0\}.
\end{align}
The second condition, $\Tr{\Omega}=0$, can be found by realizing that the Lie algebra is connected to a Lie group via the exponential map:
$e^{tX}\in \SU{N}$, $\forall t\in\mathbb{R}$ and $X\in\su{N}$ and so $1 = \det{e^{X}} = e^{\Tr{X}}$. Instead of considering a curve going through the identity at $t=0$, we can consider a curve going through a point $U$ with directional derivative $\Omega$. All curves $X(t)$ with directional derivative $\Omega$ at the point $U$ form an equivalence class. The set of these equivalence classes forms a vector space called the tangent space $T_{U} \,\SU{N}$ at the point $U$, and in particular we may identify $T_{I}\SU{N}=\su{N}$.
The group product $L_U$, i.e., left multiplication by an element $U$, is a local homeomorphism, which induces a linear map between two tangent spaces:  $d(L_U)\: |_V: T_{V} \,\SU{N} \to T_{L_U(V)}\, \SU{N}$  where $U,V\in \SU{N}$. $d(L_U)$ is called the differential of $L_U$. If we take $V = I$ to be the identity, then we see that we can move from the tangent space at the identity to the tangent space at any point in $\SU{N}$ by left multiplication of the group
$d(L_U)|_I:  \su{N} \to T_{U} \SU{N}$. Hence the tangent space at $U$ is given by
\begin{align}
    T_U\, \SU{N} = \{U \Omega |\,\Omega \in \su{N}\}.
\end{align}
The above derivation can provide us with a different understanding of a what the Lie algebra is. Starting from the group $\SU{N}$, we can consider all left-invariant vector fields $X$ on $\SU{N}$, 
\begin{align}
    d(L_U)|_V(X_V) = X |_{U\cdot V} \label{eq}
\end{align}
where $X\in \mathfrak{X}(\SU{N})$, which is the space of vector fields on $\SU{N}$. The left-invariant vector fields form a vector and since they are closed under the Lie bracket $[X,Y]$, they define a Lie algebra.

\section{The exponential map and its gradient \label{app:BCH}}
The following is due to \cite[Chapter 1, Theorem 5]{rossmann2002lie}. Let $G$ be a matrix Lie group $G \in \GL{N,\mathbb{C}}$ with a corresponding Lie algebra $\mathfrak{g}$. Define conjugation by $h\in G$ to be the transformation $c_h: G\to G$ given by $c_h(g) = h g h^{-1}$. Note that $c$ is an (inner) automorphism of $G$, since it is a isomorphism from $G$ onto itself. Let $\exp: \mathfrak{g} \to G$ be the exponential map from the Lie algebra to the group. Taking $X\in \mathfrak{g}$ and $t\in\mathbb{R}$, the differential of the conjugation map at the identity is
\begin{align}
    d(c_h)(X)=\frac{d}{dt} (h e^{t X} h^{-1}) \bigg|_{t=0} = h X h^{-1},
\end{align}
which maps an element of the Lie algebra to another element of the Lie algebra.
This map is called the adjoint representation $\Ad{h}: \mathfrak{g}\to\mathfrak{g}$ and takes $X\mapsto h X h^{-1}$. Given $X,Y\in\mathfrak{g}$, we compute
\begin{align}
    \frac{d}{dt}\Ad{e^{t X}} Y \bigg|_{t=0} &= \frac{d}{dt}e^{t X} Y e^{-t X}\bigg|_{t=0}\\
    &= XY - YX\\
    &= \comm{X}{Y}\\
    & = \ad{X}Y.
\end{align}
The operator $\ad{\hspace{-1.5mm}}: \mathfrak{g} \times \mathfrak{g} \to \mathfrak{g}$ is the Lie bracket on $\mathfrak{g}$, which for our purposes will be the standard commutator.
It now follows that
\begin{align}
    \frac{d}{dt}\Ad{e^{tX}} Y &= \frac{d}{dt}\left(e^{t X} Y e^{-tX}\right)\\
    &= X e^{tX} Y e^{-tX} + e^{tX} Y e^{-tX} (-X)\\
    & = \ad{X}(\Ad{e^{t X}}Y).
\end{align}
With the boundary condition $\Ad{e^{t X}}|_{t=0}=\mathrm{Id}$ we find that the above differential equation is solved by
\begin{align}
    \Ad{e^X} &= e^{\ad{X}}\label{eq:Ad_ad}
\end{align}
at $t=1$. Consider now the following parameterized matrix function $Y:\mathbb{R}
\times \mathbb{R}\to G$,
\begin{align}
    Y(s,t) = e^{-sX(t)} \frac{\partial}{\partial t} e^{sX(t)},
\end{align}
where $X(t)$ is a curve on $\mathfrak{g}$. We then find
\begin{align}
    \frac{\partial Y(s,t)}{\partial s} &= e^{-s X(t)}(-X(t)) \frac{\partial }{\partial t} e^{s X(t)} + e^{-s X(t)} \frac{\partial }{\partial t} (X(t) e^{s X(t)})\\
    &= -e^{-s X(t)}X(t) \frac{\partial }{\partial t} e^{s X(t)} + e^{-s X(t)}  X(t)  \frac{\partial }{\partial t}e^{s X(t)} + e^{-s X(t)} \frac{d X(t)}{d t} e^{s X(t)}\\
    &= e^{-s X(t)} \frac{d X(t)}{d t} e^{s X(t)}\\
    & = \Ad{e^{-s X(t)}} \frac{d X(t)}{d t}.
\end{align}
Then with equation \Cref{eq:Ad_ad}, we find
\begin{align}
    \frac{\partial Y(s,t)}{\partial s} = e^{-\ad{s X(t)}} \frac{d X(t)}{d t}. 
\end{align}
Using that $Y(0,t) = 0$, we find by integration
\begin{align}
    Y(1,t) &= \int_{0}^1 ds\,\frac{\partial Y(s,t)}{\partial s}.
\end{align}
Estimating the above integral forms the basis of the stochastic parameter-shift rule (see App. \ref{app:sps}). Continuing, 
\begin{align}
    Y(1,t) &=  \int_{0}^1 ds \sum_{n=0}^{\infty} \frac{(-1)^n s^n}{n!} (\ad{X})^n\frac{d X(t)}{d t}\\
    & = \left[\sum_{n=0}^{\infty} \frac{(-1)^n s^{n+1}}{(n+1)!} (\ad{X})^n\right]_{s=0}^{s=1}\frac{d X(t)}{d t}\\
    &= \left(\sum_{n=0}^{\infty} \frac{(-1)^n }{(n+1)!} (\ad{X})^n\right)\frac{d X(t)}{d t}.
\end{align}
Hence we see that 
\begin{align}
    \frac{d}{dt} e^{X(t)} 
    =e^{X(t)} Y(1,t) 
    =e^{X(t)} \left(\sum_{n=0}^{\infty} \frac{(-1)^n }{(n+1)!} (\ad{X})^n\right)\frac{d X(t)}{d t} ,\label{eq:omega_lie}
\end{align}
which gives \Cref{eq:adj}. 

Note that at this point the Baker-Campbell-Hausdorff formula can be derived with \Cref{eq:omega_lie} by considering the derivative of
\begin{align}
    e^{Z(t)} = e^{tX} e^{tY} ,
\end{align}
and subsequent integration of the derivative of $Z(t)$~\cite{Achilles2022bch}.

\section{Connection between Riemannian and standard gradient flow on \texorpdfstring{$\SU{N}$ }{sun}\label{app:riemann}}
Here, we highlight the connection between a Riemannian gradient flow on $\SU{N}$ with respect and \Cref{eq:circuit_with_omega}~\cite{SchulteHerbruggen2010gradflow, Wiersema2022riemann}. We take the Hilbert-Schmidt inner product $\langle A,B \rangle = \Tr{A^\dag B}$ as a Riemannian metric on $\SU{N}$ and write the cost as 
\begin{align}
    C(U) = \Tr{U\rho U^\dag H}.
\end{align}
The differential of the cost $d C(U): T_U \SU{N} \to \mathbb{R}$, evaluated at the point $U\Omega$, where $\Omega\in\su{N}$ and $U\in \SU{N}$, is then
\begin{align}
    d C(U) (U \Omega) &= d(\Tr)(U\rho U^\dag H)\circ d (U\rho U^\dag H) (U \Omega)\label{eq:dCU}\\
    &= \Tr{d(U)\rho U^\dag H + U\rho d(U^\dag) H}(U\Omega)\\
    &= \Tr{U\Omega\rho U^\dag H + U\rho \Omega^\dag U^\dag H}\\
    &= \Tr{\rho U^\dag H U\Omega - U^\dag H U \rho \Omega}\\
    &= \Tr{\comm{\rho}{U^\dag H U} \Omega}\\
    &= \langle -U\comm{\rho}{U^\dag H U} , U\Omega \rangle
\end{align}
via the chain rule and using that $\langle A, B \rangle = \langle UA, UB \rangle$. The compatibility condition for the Riemannian gradient tells us that 
\begin{align}
    d C(U) (U \Omega) = \langle \grad C(U), U\Omega\rangle.
\end{align}
Hence we can identify
\begin{align}
    \grad C(U) = -U\comm{\rho}{U^\dag H U}
\end{align}
with the corresponding gradient flow
\begin{align}
    \dot{U} = \grad C(U).
\end{align}
Next, we follow the results of~\cite{lezcano2019cheap} in our notation. Consider the cost function
\begin{align}
    C(X) = \Tr{e^X\rho e^{-X} H}
\end{align}
where $X\in \su{N}$. Note that although the minimum of the function is unchanged, the parameterization of a unitary via the Lie algebra changes the resulting gradient flow. To see this, we consider again the differential,
\begin{align}
    d (C\circ\exp X) ( A) &= d(\Tr)(e^X\rho e^{-X} H)\circ d (e^X\rho e^{-X} H) (A)\\
    &= \Tr{d (e^X)\rho U^\dag H + U\rho d(e^{-X}) H}(A),
\end{align}
where now $d (C\circ\exp X):\su{N}\to\mathbb{R}$ and $A\in\su{N}$.
We can now make use of the result in \Cref{eq:omega_lie},
\begin{align}
    d(e^X) (Y)
    &=e^{X} \left(\sum_{n=0}^{\infty} \frac{(-1)^n }{(n+1)!} (\ad{X})^n\right) (Y)\\
    &=e^X \Phi_X (Y)
\end{align}
to obtain 
\begin{align}
    d (C\circ\exp X) ( A)  = \Tr{e^X \Phi_X(A) \rho e^{-X} H + e^X\rho \Phi_X^\dag(A) e^{-X}  H}.
\end{align}
Using that $\Phi_X^\dag(A) = - \Phi_X(A)$ and $\Phi_X^\dag(\Phi_X(A)) = A$, we then have
\begin{align}
    d (C\circ\exp X) ( A)& = \Tr{\comm{\rho}{e^{-X} H e^X} \Phi_X(A)}\\
    &= \langle -\comm{\rho}{e^{-X} H e^X},\Phi_X(A)\rangle\\
    &=\langle \Phi_X(\comm{\rho}{e^{-X} H e^X}),A\rangle.
\end{align}
Hence the gradient on $\su{N}$ is 
\begin{align}
    \grad C(X) = \Phi_X(\comm{\rho}{e^{-X} H e^X}).
\end{align}
We therefore see that only when $\Phi_X = I$ do we obtain the Riemannian gradient on $\SU{N}$; hence only if $X\in\mathfrak{g}$ where $\mathfrak{g}$ is abelian. The optimization path followed by optimizing the parameters of an $\mathbb{SU}(N)$ gate is thus different from the one following a Riemannian gradient descent on $\SU{N}$.

\section{The generalized parameter-shift rule} \label{app:gps}
When using our $\mathbb{SU}(N)$ gate in an application that involves gradient-based optimization, like demonstrated in the numerical experiments in this work, we require calculating the partial derivatives in \Cref{eq:circuit_with_omega}. Here we provide the details for how this can be achieved in practice via the generalized parameter-shift rule (GPSR)~\cite{Wierichs2022grads, Izmaylov2021grads, Oleksandr2021grads}.
Without loss of generality, we rewrite the cost function in \Cref{eq:cost} as
\begin{align}
    C(t) = \Tr{\tilde{H} e^{t \Omega}\tilde{\rho} e^{-t \Omega}} \label{eq:cx},
\end{align}
where we absorbed the rest of the circuit into $\tilde{\rho}$, $\tilde{H}$ and fixed any other parameters in the circuit. Computing the derivative of \Cref{eq:cx} with respect to $t$ is equivalent to the problem of finding the gradient in \Cref{eq:circuit_with_omega} at $t=0$. For the numerical experiments in this paper we make use of the particular implementation of the GPSR in~\cite{Oleksandr2021grads} as well as the alternative method outlined in App. \ref{app:decompose_generators}.

The skew-Hermitian operator $\Omega$ in \Cref{eq:cx} has (possibly degenerate) eigenvalues $\{i\lambda_j \}$. We define the set of unique spectral gaps as $\Gamma =\{\abs{\lambda_j - \lambda_{j'}}\}$ where $j'>j$. Note that for $d$ distinct eigenvalues, the number of unique spectral gaps $R$ is bounded via $R\leq d(d-1)/2$. We relabel every unique spectral gap with an integer, i.e.~we write $\Delta_r \in \Gamma$, and define the corresponding vector $\bm{\Delta} = (\Delta_1,\ldots , \Delta_R)$.
We pick a set of parameter shifts that are equidistant and create a vector of $R$ shifts $\bm{\delta} = (\delta_1,\ldots, \delta_R)$ where
\begin{align}
    \delta_n = \frac{(2n - 1)\pi}{4R}, \quad n=1,\ldots, R.
\end{align}
Next, we create the length $R$ cost vector $\bm{c}$ and the $R\times R$ matrix $\mathrm{M}$
\begin{align}
    \bm{c} =   
    \begin{pmatrix}
    C(\delta_1) - C(-\delta_1)\\
    C(\delta_2) - C(-\delta_2)\\
    \vdots\\
    C(\delta_R) - C(-\delta_R)
    \end{pmatrix}, \quad
    \mathrm{M}(\bm{\delta}) =   
    \begin{pmatrix}
    2\sin(\delta_1 \Delta_1) & \hdots & 2\sin(\delta_1 \Delta_R)\\
    2\sin(\delta_2 \Delta_1) & \hdots & 2\sin(\delta_2 \Delta_R)\\
    \vdots & \vdots  &\vdots \\
    2\sin(\delta_R \Delta_1) & \hdots & 2\sin(\delta_R \Delta_R)
    \end{pmatrix}.
\end{align}
We then calculate the coefficient vector as
\begin{align}
    \bm{r} = (\mathrm{M}(\bm{\delta}))^{-1} \cdot \bm{c},
\end{align}
which finally gives the gradient
\begin{align}
    \frac{d C(x)}{dt}\bigg|_{t=0} = \bm{\Delta} \cdot \bm{r}.
\end{align}
Since the final gradient is exact, finite shot estimates of all $c(\delta_n)$'s will produce an unbiased estimate of $dC(x) / dt$,
\begin{align}
    \frac{d C(x)}{dt}\bigg|_{t=0} = \bm{\Delta} \cdot(\mathrm{M}(\bm{\delta}))^{-1} \cdot\mathbb{E}\left[ \bm{c}\right],
\end{align} 
where we pulled out $\bm{\Delta}$ and $ (\mathrm{M}(\bm{\delta}))^{-1}$ since they are constant.
The difficulty of obtaining an accurate estimate of the gradient is determined by the variance of this estimator, which is given by
\begin{align}
    \mathrm{Var}\left[\frac{d C(x)}{dt}\bigg|_{t=0}\right] = (\bm{\Delta} \cdot(\mathrm{M}(\bm{\delta}))^{-1})^{\odot 2} \cdot\left(\mathbb{E}\left[ \bm{c}^{\odot 2}\right] - \mathbb{E}\left[ \bm{c}\right]^{\odot 2} \right),
\end{align}
where we used $\odot 2$ to emphasize that the squares are taken elementwise. We assume that the estimates for each shifted circuit obey normal statistics and so since these are independent, we can write
\begin{align}
    \mathbb{E}\left[ \bm{c}^{\odot 2}\right] - \mathbb{E}\left[ \bm{c}\right]^{\odot 2} \approx \frac{1}{N_{\mathrm{shots}}}
    \begin{pmatrix}
    \sigma^2(\delta_1) + \sigma^2(-\delta_1)\\
    \sigma^2(\delta_2) + \sigma^2(-\delta_2)\\
    \vdots\\
    \sigma^2(\delta_R) + \sigma^2(-\delta_R)
    \end{pmatrix},
\end{align}
where $\sigma^2(\pm\delta_n)$ is the variance of the cost for each shifted circuit. If we assume that the dependence of $\sigma$ on the shifts is mild, i.e., $\sigma(\delta) \approx \sigma_0$ then the total variance will only depend on the prefactor. Setting the estimate $\mathbb{E}\left[ \bm{c}^{\odot 2}\right] - \mathbb{E}\left[ \bm{c}\right]^{\odot 2} = (\sigma_0^2,\sigma_0^2,\ldots,\sigma_0^2)$ then finally gives
\begin{align}
    \mathrm{Var}\left[\frac{d C(x)}{dt}\bigg|_{t=0}\right] &\approx 2 \sigma_0^2\left(\sum_n \Delta_n \mathrm{M}_{nm}^{-1}(\bm{\delta})\right)^{2}.
\end{align}
One can minimize this quantity with respect to $\bm{\delta}$ to find the optimal set of shifts for the gradient estimation~\cite{Oleksandr2021grads}.

\section{Alternative differentiation of \texorpdfstring{$\SU{N}$}{SU(N)} gates}\label{app:alt_diff}
In this section we summarize a number of alternative differentiation techniques that may be applied to the presented $\mathbb{SU}(N)$ gates.
In particular, we discuss the stochastic parameter-shift rule, which was created for multi-parameter gates, finite differences as a standard tool in numerical differentiation, as well as an alternative to the general parameter-shift rule above which also exploits the notion of effective generators.

\subsection{The Stochastic parameter-shift rule}\label{app:sps}
The stochastic parameter-shift rule~\cite{Crooks2019grads} relies on the following operator identity~\cite{Wilcox1967exponential}
\begin{align}
    \frac{\partial e^{Z(x)}}{\partial x} = \int_{0}^1 ds\, e^{sZ(x)} \frac{\partial Z(x)}{\partial x}e^{(1-s)Z(x)},
\end{align}
for any bounded operator $Z(x)$. We now fix all parameters $\theta_m$ for $m\neq l$ and rewrite the cost in \Cref{eq:cost} as
\begin{align}
    c(x) = \Tr{H e^{i(x G_l + A')}\rho e^{-i(x G_l + A')}},\quad A' \equiv \sum_{m\neq l} \theta_m G_m . \label{eq:csps}
\end{align}
Then, if we take $Z(x)$ to be the operator $Z(x) = i(x G_l + A')$ we can construct the gradient of \Cref{eq:csps} as
\begin{align}
    \frac{\partial c(x)}{\partial x} = \int_{0}^1 ds \left(C_+(x,s) - C_-(x,s)\right) ,\label{eq:grad_sps}
\end{align}
where 
\begin{align}
    C_\pm(x,s) &= \Tr{H V_\pm(x) \rho V^\dag_\pm(x)}\\
    V_\pm(x) &= e^{is(x G_l + A')}e^{\pm \frac{\pi}{4} G_l} e^{i(1-s)(x G_l + A')}.
\end{align}
Hence, similar to our method, the gradient evaluation requires adding gates to the circuit and evaluating the new circuit. However, the stochastic parameter-shift rule comes at a significant cost: the evaluation of the integral in \Cref{eq:grad_sps}. In practice, one approximates this integral by sampling values of $s$ uniformly in the interval $(0,1)$ and then calculating the costs $C_\pm(x,s)$ with a finite-shot estimate. Although this produces an unbiased estimator, we find that the variance of this estimator is larger than ours, see \Cref{fig:sampled_grad}.
In addition, this method leads to a bigger number of unique circuits to compute the derivative, increasing the compilation overhead for both hardware and simulator implementations.

\subsection{Finite differences}\label{app:fd}
Finite differences are widely used to differentiate functions numerically. We briefly discuss this method in the context of variational quantum computation (VQC) and refer the reader to recent works comparing and optimizing differentiation techniques for VQC~\cite{mari2021estimating, bittel2022fast}.

In particular, we consider the central difference recipe
\begin{align}\label{eq:central_difference}
    \partial_{\text{FD}, \theta_j} C(\btheta) = \frac{1}{\delta} \left[C\left(\btheta+\frac\delta2 \bm{e}_j\right) - C\left(\btheta-\frac\delta2 \bm{e}_j\right)\right],
\end{align}
where $\delta$ is a freely chosen shift parameter and $\bm{e}_j$ is the $j$th canonical basis vector.
This recipe is an approximation of $\partial_{\theta_j}C(\btheta)$, making the corresponding estimator on a shot-based quantum computer biased.
This bias, which depends on $\delta$, has to be traded off against the variance of the estimator, which grows approximately with $\delta^{-2}$.

In classical computations, the numerical precision cutoff plays the role of the variance. Due to the high precision in classical computers, this leads to optimal shifts $\delta\ll1$, which allows treating the bias to leading order in $\delta$ and thus enables rough estimates of the optimal $\delta^\ast$ in advance. On a quantum computer, however, the variance typically is more than ten orders of magnitude larger, leading to a very different $\delta^\ast$, which furthermore depends on the function and derivative values.
As a consequence, shifts of $\mathcal{O}(1)$ become a reasonable choice, highlighting the similarity of the central difference recipe to the two-term parameter-shift rule~\cite{mari2021estimating}.

As a demonstration of the above, and in preparation for the numerical experiments shown in \Cref{fig:exact_grad,fig:sampled_grad}, we compute the central difference gradient for a random single-qubit Hamiltonian, a single $\mathbb{SU}(2)$ gate $U(\btheta)=\exp(iaX+ibY)$ and $\delta\in\{0.5, 0.75, 1.0\}$.
For this, we evaluate the mean and standard error $50$ times and show the difference to the exact derivative in \Cref{fig:finite_diff_tune}.
As expected, we observe that the bias increases with $\delta$ and that the variance is suppressed with larger $\delta$. We determine $\delta=0.75$ to be a reasonable choice for the purpose of the demonstration in \Cref{fig:exact_grad,fig:sampled_grad}, but stress that for any other circuit, qubit count, Hamiltonian, and even for a different parameter position $\btheta$ for this circuit, the optimal shift size needs to be determined anew.

\begin{figure}[htb!]
    \centering
    \includegraphics[width=\textwidth]{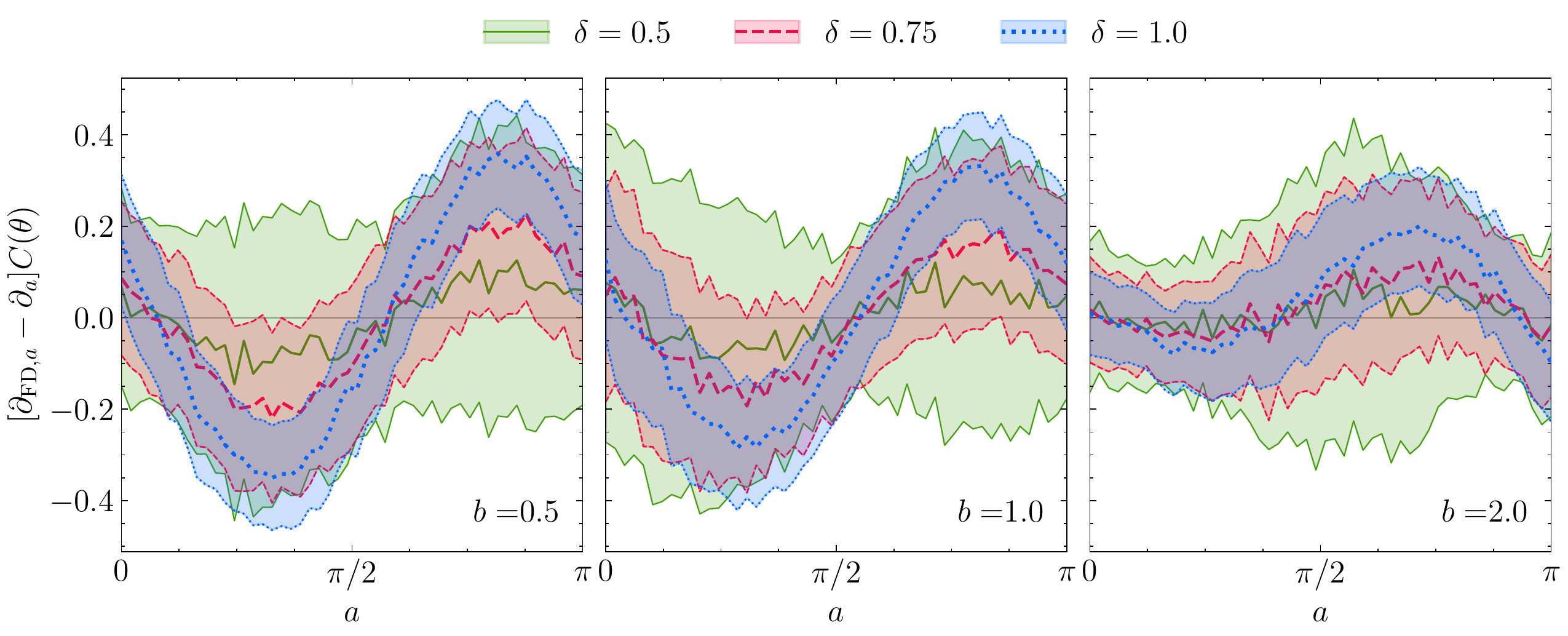}
    \caption{
    Error of the central difference gradients with $\delta=0.5,0.75,1.0$ for the single-qubit example from \Cref{fig:exact_grad,fig:sampled_grad}. The value of the second parameter again is fixed to $b=0.5,1.0,2.0$ in the panels (from left to right).
    The shift parameter $\delta$ influences the strengths of bias and variance, leading to a trade-off. For smaller $\delta$, the variance is enhanced due to the coefficients in \Cref{eq:central_difference} that scale with $\delta^{-1}$. For larger $\delta$, the bias based on the approximate nature of \Cref{eq:central_difference} is increased. We find $\delta=0.75$ to be a reasonable choice for this particular circuit, Hamiltonian and parameter position $\btheta$.
    }
    \label{fig:finite_diff_tune}
\end{figure}

\subsection{Decomposing effective generators for differentiation}\label{app:decompose_generators}

In \Cref{alg:grad} we suggest to use the generalized parameter-shift rule~\cite{Izmaylov2021grads, Wierichs2022grads, Oleksandr2021grads} in order to compute the partial derivatives $\frac{\partial}{\partial \theta_l}C(\btheta)$ independently.
In addition, \Cref{th:spectral} bounds the number of frequencies occurring in the univariate auxiliary cost function $C(t) = \Tr{U(\btheta)e^{t\Omega_l(\btheta)}\rho e^{-t\Omega_l(\btheta)}U^\dag(\btheta)H}$ and the corresponding number of parameter shifts required during differentiation.

Realizing the shift rule requires us to implement not only $U(\btheta)$---which is necessary to compute $C(\btheta)$ itself---but also the gate $e^{t_r \Omega_l(\btheta)}$ for $2R$ shift values $t_r$ and each $\Omega_l$ separately.
Alternatively, we may follow the approach to decompose all effective generators $\Omega_l$ and compute the derivative as a linear combination of the derivatives for simpler auxiliary gates, similar to~\cite{Izmaylov2021grads}. In particular, we again choose the Pauli basis of $\su{N}$ for this decomposition.

Decompose the effective generators $\Omega_l(\btheta)$ as
\begin{align}
    \Omega_l(\btheta) &= \sum_{m} \omega_{lm}(\btheta) G_m, \qquad 
    \omega_{lm}(\btheta) = \frac{1}{N} \Tr{G_m \Omega_l(\btheta)}.
\end{align}
Note that the coefficients are purely imaginary due to the skew-Hermiticity of $\Omega_l(\btheta)$.
The partial derivative we are interested in can then be written as 
\begin{align}
    \frac{\partial}{\partial \theta_l} C(\btheta)
    &= \Tr{H U(\btheta)\left[\sum_{m=1}^d \omega_{l m}(\btheta) G_m,\rho\right] U^\dag(\btheta)}\\
    &= \sum_m \omega_{l m}(\btheta) \Tr{H U(\btheta)[G_m,\rho] U^\dag(\btheta)}\\
    &= \sum_m \omega_{l m}(\btheta) 2i \frac{\mathrm{d}}{\mathrm{d}t}\Tr{H U(\btheta)\left[\exp\left\{-i\frac{t}{2} G_m\right\},\rho\right] U^\dag(\btheta)}\bigg|_{t=0}\\
    &= \sum_m \widetilde{\omega}_{l m}(\btheta)  \frac{\mathrm{d}}{\mathrm{d}t}C_{G_m}(\btheta, t)\big|_{t=0}.
\end{align}
Here we abbreviated $\widetilde{\omega}_{lm}(\btheta) = 2i\omega_{lm}(\btheta)$ and wrote $C_{G_m}(\btheta, t)$ for the cost function with a rotation gate with parameter $-t/2$ about $G_m$ inserted before $U(\btheta)$. This modified cost function can be differentiated with respect to $t$ using the original two-term parameter-shift rule, as the inserted gate is generated by (the multiple of) a Pauli string.

The above linear combination of Pauli rotation derivatives can be reused for all partial derivatives, so that the full gradient for one $\mathbb{SU}(N)$ gate is given by
\begin{align}
    \nabla C(\btheta) &= \widetilde{\omega}(\btheta) \cdot \bm{\mathrm{d}C},\\
    \bm{\mathrm{d}C} &= \left(\begin{matrix}
        \frac{\mathrm{d}}{\mathrm{d}t}C_{G_1}(\btheta, t)\big|_{t=0} \\
        \vdots\\
        \frac{\mathrm{d}}{\mathrm{d}t}C_{G_d}(\btheta, t)\big|_{t=0}
    \end{matrix}\right).
\end{align}

So far we did not discuss the number of Pauli strings occurring in the decomposition of the generators $\Omega_l$. As can be seen from \Cref{eq:omega} and our definition of the DLA in \Cref{sec:resource}, this number is bounded by the size of the DLA, and we again remark that this bound will be saturated for most values of $\btheta$.
As two shifts are required for each Pauli rotation, the gradient $\nabla C(\btheta)$ can thus be computed using $2\dim \mathcal{L}(A(\btheta))$ circuits, using Pauli rotations from the DLA and, e.g., shift angles $\pm\frac{\pi}{2}$.

As we only required a linear decomposition of $\Omega_l$, any other basis for the DLA may be used as well, potentially allowing for fewer shifted circuits or different inserted gates that may be more convenient to realize on hardware.

\section{\label{app:gate_speed_limit}Gate speed limit}
The following Lemmas are used in \Cref{sec:gate_speed_limits}.
\begin{lemma}
	\label{lemma:trace_of_square}
	For Hamiltonians of the form $H = \sum_m \theta_m G_m$, where $G_m$ are strings of $~\log_2 N$ Pauli operators, $\textrm{Tr}\left\{ H^2 \right\}= N \sum_m \theta_m^2$.
\end{lemma}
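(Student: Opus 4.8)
The plan is to reduce the statement to the orthogonality of Pauli strings under the Hilbert--Schmidt inner product. First I would expand the square and use linearity of the trace,
\begin{align*}
    \Tr{H^2} = \sum_{m,m'} \theta_m \theta_{m'}\, \Tr{G_m G_{m'}},
\end{align*}
so that the whole statement hinges on showing $\Tr{G_m G_{m'}} = N\,\delta_{mm'}$.

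For the diagonal terms, each $G_m$ is a tensor product of $\log_2 N$ single-qubit Pauli operators (possibly including the identity), and since every one of $I,X,Y,Z$ squares to the $2\times 2$ identity, $G_m^2 = I_N$; hence $\Tr{G_m^2} = \Tr{I_N} = N$. For the off-diagonal terms with $m \neq m'$, write $G_m = \bigotimes_{j} \sigma_j$ and $G_{m'} = \bigotimes_{j} \tau_j$ over the $\log_2 N$ tensor factors. Then $G_m G_{m'} = \bigotimes_j (\sigma_j \tau_j)$, and each local factor $\sigma_j\tau_j$ equals the identity when $\sigma_j = \tau_j$ and a unit-modulus scalar times a non-identity Pauli otherwise. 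Because $m \neq m'$, at least one position has $\sigma_j \neq \tau_j$. Using that the trace factorizes over tensor factors, $\Tr{G_m G_{m'}} = \prod_j \Tr{\sigma_j \tau_j}$, and the factor coming from any position with $\sigma_j \neq \tau_j$ is the trace of a (scalar multiple of a) non-identity Pauli matrix, which is zero; hence the whole product vanishes.

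Substituting $\Tr{G_m G_{m'}} = N\,\delta_{mm'}$ gives $\Tr{H^2} = N\sum_m \theta_m^2 = N\vert\btheta\vert^2$, which is the claim and immediately yields the stated equivalence between normalizing the Hamiltonian and normalizing $\btheta$ in Euclidean norm. I do not expect a real obstacle here: the only point needing a little care is the off-diagonal vanishing, which rests on the tracelessness of single non-identity Pauli matrices together with the multiplicativity of the trace across tensor factors. I note also that if one instead uses the generators of \Cref{eq:paulis}, which carry an extra factor of $i$, the identical computation with $H^\dagger H$ in place of $H^2$ applies, since the two extra factors of $\mp i$ cancel.
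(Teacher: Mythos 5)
Your proof is correct and follows the same route as the paper: expand $\Tr{H^2}$ by bilinearity and invoke the orthonormality $\Tr{G_m G_{m'}} = N\,\delta_{mm'}$ of Pauli strings under the Hilbert--Schmidt inner product, which the paper simply asserts and you spell out via the tensor-factorization of the trace. Your closing remark about the factor of $i$ in the generators of \Cref{eq:paulis} (so that one really wants $\Tr{H^\dagger H}$) is a valid observation consistent with how the normalization is used in \Cref{sec:gate_speed_limits}.
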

\begin{proof}
All Pauli strings $G_m$ are orthonormal with respect to the trace inner product, $\Tr(G_m^\dag G_n) = \delta_{n,m} N$. Using this gives 
\begin{align}
	\mathrm{Tr}\left\{ H^2 \right\} &= \sum_{m,n} \theta_m\theta_n \Tr{G_m G_n} \\ 
    &= N \sum_{m,n} \theta_m\theta_n \delta_{n,m}\\
    &= N \sum_m \theta_m^2.
\end{align}
\end{proof}

\begin{lemma}
	\label{lemma:path_distance}
    The length of a smooth curve on the Riemannian manifold $\SU{N}$, with metric $g(x,y) = \Tr{x^\dagger y}$, for a time-independent Hamiltonian $H$ after a fixed time $\tau$ only depends on the norm of $H$ and $\tau$ .
\end{lemma}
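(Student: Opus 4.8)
The plan is to evaluate the Riemannian arc length of the integral curve directly. The relevant curve is $t\mapsto U(t)=e^{tH}$ for $t\in[0,\tau]$, with $U(0)=I$, where $H$ is the (skew-Hermitian) time-independent generator of the evolution appearing in \eqref{eq:unitary_evolution_with_t}; nothing changes if one instead writes $U(t)=e^{-itH}$ with Hermitian $H$, since $|iH|=|H|$. The length of a smooth curve $\gamma$ on a Riemannian manifold is $L[\gamma]=\int_0^\tau\sqrt{g(\dot\gamma(t),\dot\gamma(t))}\,dt$, so the entire statement reduces to showing that the integrand is constant in $t$ and equals the norm of $H$.

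First I would compute the velocity. Since $H$ commutes with $e^{tH}$, the tangent vector is $\dot U(t)=He^{tH}=e^{tH}H\in T_{U(t)}\SU{N}$. Plugging this into the Hilbert--Schmidt metric gives
\begin{align}
    g(\dot U(t),\dot U(t))=\Tr{\dot U(t)^\dagger\dot U(t)}=\Tr{H^\dagger\left(e^{tH}\right)^\dagger e^{tH}H}.
\end{align}
Because $e^{tH}$ is unitary, $\left(e^{tH}\right)^\dagger e^{tH}=I$, and the cyclicity of the trace leaves $g(\dot U(t),\dot U(t))=\Tr{H^\dagger H}$, which is manifestly independent of $t$. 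Equivalently, one can left-translate the velocity back to the Lie algebra, $U(t)^\dagger\dot U(t)=H$, and note that the metric written here is the (bi-invariant) Hilbert--Schmidt form, so the norm of the velocity is translation-invariant; this reformulation sidesteps even having to talk about tangent spaces at different base points.

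Finally I would integrate the constant integrand,
\begin{align}
    L=\int_0^\tau\sqrt{\Tr{H^\dagger H}}\,dt=\tau\sqrt{\Tr{H^\dagger H}},
\end{align}
which depends only on $\tau$ and on the norm $\|H\|=\sqrt{\Tr{H^\dagger H}}$, as claimed. Specializing via Lemma~\ref{lemma:trace_of_square}, for $H=\sum_m\theta_m G_m$ this yields $L=\tau\sqrt{N\sum_m\theta_m^2}$, which reduces to $L=\sqrt{N}\,t$ in the normalized case $|\btheta|=1$ used in \Cref{sec:gate_speed_limits}. There is essentially no real obstacle here beyond bookkeeping: the one point that deserves care is that $\dot U(t)$ is a tangent vector at the moving base point $U(t)$ rather than at the identity, but because the generator commutes with its exponential and $e^{tH}$ is unitary, this subtlety collapses immediately.
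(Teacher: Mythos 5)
Your proof is correct and follows essentially the same route as the paper's: compute the velocity of the one-parameter curve, use unitarity of $e^{tH}$ (equivalently, invariance of the Hilbert--Schmidt metric under left translation) to show the speed is the constant $\sqrt{\Tr{H^\dagger H}}$, and integrate to get $L=\tau\sqrt{\Tr{H^\dagger H}}$. The only difference is cosmetic --- the paper writes the velocity via the Schr\"odinger equation $\dot U=-iHU$ with Hermitian $H$, while you use $\dot U=He^{tH}$ with skew-Hermitian $H$ --- and your remark about the tangent vector living at the moving base point is a fair clarification of a step the paper leaves implicit.
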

\begin{proof}
	The unitary evolution of $U(\bm{\theta}; t)$, parameterized by $t$, corresponds to a smooth curve on $\mathrm{SU}(N)$ with length according to the Riemannian metric, 
    $g(x,y) = \Tr{x^\dagger y}$~\cite{Russell2013Geometric}.
    Integrating over the metric norm through the tangent spaces from $t=0$ to final time $\tau$ gives the path length,
	\begin{align}
        L[U(\bm{\theta};t), \tau] &=  \int^{\tau}_{t=0} ds\\
		 &=  \int^{\tau}_{0} \sqrt{g(\dot{U}(\bm{\theta}; t), \dot{U}(\bm{\theta}; t))} dt\\
		 &= \int^{\tau}_{0} \sqrt{ \textrm{Tr}\left\{\dot{U}^\dagger(\bm{\theta}; t) \dot{U}(\bm{\theta}; t) \right\}} dt,
	\end{align}
	where $\dot{U} = \frac{dU}{dt}$. From Schrödinger evolution,
	\begin{equation}
		\frac{d U(\bm{\theta}; t)}{d t} = -i H U(\bm{\theta}; t),
	\end{equation}
	we find
	\begin{equation}
		L[U(\bm{\theta}; t), \tau] = \tau \sqrt{\textrm{Tr}\left\{ H^2 \right\}},
	\end{equation}
	for time-independent Hamiltonians. Therefore for all Hamiltonians with fixed norm $\Tr{H^2}$, the path distance travelled after time $\tau$ is the same regardless of the specific unitary evolution.
\end{proof}
\begin{lemma}
	\label{lemma:minimal_path}
    The minimal path between the identity element $I$ and a point $V$ on the Riemannian manifold of $\SU{N}$, with metric $g(x,y) = \Tr{x^\dagger y}$, is a geodesic curve $\gamma(t) = e^{X t}$ with $X \in \su{N}$.
\end{lemma}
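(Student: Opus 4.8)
The plan is to recognize that the Riemannian metric $g(x,y) = \Tr{x^\dagger y}$ on $\SU{N}$ is \emph{bi-invariant}, and then to invoke the classical structure theory of geodesics on a Lie group equipped with a bi-invariant metric, together with completeness of the compact group $\SU{N}$. First I would verify bi-invariance. Left-invariance holds essentially by construction of the identification $T_U\SU{N} = \{U\Omega : \Omega \in \su{N}\}$: for $\Omega_1,\Omega_2\in\su{N}$ and $U\in\SU{N}$,
\begin{align*}
    g(U\Omega_1, U\Omega_2) = \Tr{\Omega_1^\dagger U^\dagger U \Omega_2} = \Tr{\Omega_1^\dagger\Omega_2} = g(\Omega_1,\Omega_2).
\end{align*}
Right-invariance and $\mathrm{Ad}$-invariance follow in the same way from cyclicity of the trace and unitarity: $\Tr{(\Ad{g}\Omega_1)^\dagger\,\Ad{g}\Omega_2} = \Tr{g\,\Omega_1^\dagger\, g^{-1} g\,\Omega_2\, g^{-1}} = \Tr{\Omega_1^\dagger\Omega_2}$. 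Hence $g$ is bi-invariant.

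Second, I would use the standard fact (e.g.\ Milnor, \emph{Curvatures of left invariant metrics on Lie groups}, or do Carmo) that for a bi-invariant metric the Levi-Civita connection on left-invariant vector fields is $\nabla_X Y = \tfrac12\comm{X}{Y}$, so that $\nabla_{\dot\gamma}\dot\gamma = 0$ for the curves $\gamma(t) = e^{tX}$. In other words, the geodesics emanating from the identity are exactly the one-parameter subgroups $t\mapsto e^{tX}$ with $X\in\su{N}=T_I\SU{N}$, and by left-translation every geodesic of $(\SU{N},g)$ has the form $U e^{tX}$.

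Third, I would invoke Hopf--Rinow: $\SU{N}$ is a compact, connected manifold, hence complete as a metric space, so any two points---in particular $I$ and $V$---are joined by at least one length-minimizing geodesic $\gamma$. Since $\gamma(0)=I$, the previous step forces $\gamma(t)=e^{tX}$ for some $X\in\su{N}$; rescaling the parameter so that $\gamma(1)=V$ (using surjectivity of $\exp$ on the compact connected group $\SU{N}$) gives precisely the asserted form $\gamma(t)=e^{Xt}$.

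The main obstacle is not any one computation but being careful about what is actually claimed: there is in general more than one geodesic, and more than one \emph{minimizing} geodesic, between $I$ and $V$ (the statement should be read as ``some minimal path has this form''), which is exactly what Hopf--Rinow together with the one-parameter-subgroup description delivers. The only genuine work is the bi-invariance check above and citing the connection formula $\nabla_X Y = \tfrac12\comm{X}{Y}$. I would also note in passing that this is the geometric content underpinning \Cref{lemma:path_distance} and the gate speed-limit discussion: once the minimizer is known to be $e^{Xt}$, its length depends only on $\Tr{X^\dagger X}$ and the elapsed parameter, which is what feeds into \Cref{th:gate_speed_limit}.
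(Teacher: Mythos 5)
Your proposal is correct and follows essentially the same route as the paper: both rest on the fact that the geodesics of $\SU{N}$ under the bi-invariant Hilbert--Schmidt metric are the one-parameter subgroups $e^{tX}$ (the paper cites this as a textbook proposition where you derive it from $\nabla_X Y = \tfrac12\comm{X}{Y}$), and both then argue that a length-minimizing curve from $I$ to $V$ must be one of these geodesics. If anything, your explicit appeal to Hopf--Rinow for the \emph{existence} of a minimizer is a welcome tightening of the paper's argument, which only invokes extremality of geodesics and leaves the existence of a minimum implicit.
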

\begin{proof}
    From Proposition~3.10~\cite{arvanitogeorgos_introduction_2003}, the geodesics of $\SU{N}$ are the one-parameter subgroups, given by $\gamma(t) = e^{X t}$.
    In general, multiple geodesics curves can give $V$ from the identity. The minimal path is the curve with the minimum length.
    Since geodesic curves are the extrema of the path length functional~\cite[Lemma~9.3]{Helgason1978Differential}, there must exist a geodesic, which is of the form $\gamma(t) = e^{X t}$, that is the minimal path to $V$.

\end{proof}

\subsection{\label{app:th:gate_speed_limit_proof}Proof of~\texorpdfstring{\Cref{th:gate_speed_limit}}{Theorem 1}}
We restate \Cref{th:gate_speed_limit} again for convenience.
\setcounter{theorem}{0}
\begin{theorem}
	For unitary gates generated by normalized time-independent Hamiltonians, consider a general circuit decomposition of two gates $U(\bphi^{(2)};t_2)U(\bphi^{(1)};t_1)$.  There exists an equivalent evolution with an $\mathbb{SU}(N)$ gate $U(\btheta;t_g) =U(\bphi^{(2)};t_2)U(\bphi^{(1)};t_1)$, with evolution time $t_g$, such that 
	\begin{align*}
		t_g \leq t_1+t_2,
	\end{align*}
	with equality if $\bphi^{(1)}+\bphi^{(1)} = \btheta$.
\end{theorem}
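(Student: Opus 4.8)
The plan is to translate the gate-equality constraint into a statement about path lengths on $\SU{N}$ and then invoke the triangle inequality together with the geodesic characterization from Lemma~\ref{lemma:minimal_path}. First I would set $W = U(\bphi^{(2)};t_2)U(\bphi^{(1)};t_1)$ and observe that, since each factor is a one-parameter subgroup generated by a normalized Hamiltonian, the first factor traces a geodesic from $I$ to $V_1 := U(\bphi^{(1)};t_1)$ of length $\sqrt{N}\,t_1$ (by Lemma~\ref{lemma:path_distance}), and left-translating the second factor's orbit by $V_1$ gives a curve from $V_1$ to $W$ of length $\sqrt{N}\,t_2$ — using left-invariance of the Hilbert-Schmidt metric, $g(V_1 x, V_1 y) = \Tr{x^\dagger V_1^\dagger V_1 y} = g(x,y)$, established in App.~\ref{app:sun}. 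Concatenating these two curves yields a (generally non-geodesic) path from $I$ to $W$ of total length $\sqrt{N}(t_1+t_2)$.

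Next I would use the fact that $W \in \SU{N}$ and, by Lie's third theorem / surjectivity of the exponential map on the compact connected group $\SU{N}$, write $W = \exp{\bar A(\btheta) t_g}$ for some normalized $\bar A(\btheta)$ and some $t_g \geq 0$; this is exactly the $\mathbb{SU}(N)$ gate $U(\btheta;t_g)$. By Lemma~\ref{lemma:minimal_path}, among all geodesics from $I$ to $W$ the minimal one has length equal to the Riemannian distance $d(I,W)$, and one such geodesic is a one-parameter subgroup $e^{Xt}$; choosing $\btheta$ so that it realizes this minimal geodesic gives $L[U(\btheta;t),t_g] = \sqrt{N}\,t_g = d(I,W)$. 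Since the concatenated path above is \emph{a} path from $I$ to $W$, its length is at least the distance: $\sqrt{N}(t_1+t_2) \geq d(I,W) = \sqrt{N}\,t_g$, which gives $t_g \leq t_1 + t_2$ after dividing by $\sqrt{N}$.

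For the equality case, suppose $\bphi^{(1)} + \bphi^{(2)} = \btheta$. Here I would note that equality in the triangle inequality forces the concatenated path to itself be a minimal geodesic, which happens precisely when the two one-parameter subgroups are generated by the same direction in $\su{N}$, i.e.\ when the (unnormalized) generators add: $A(\bphi^{(1)}) + A(\bphi^{(2)})$ is proportional to $A(\btheta)$, so that $U(\bphi^{(2)};t_2)U(\bphi^{(1)};t_1) = \exp{A(\bphi^{(1)})+A(\bphi^{(2)})} = \exp{A(\btheta)}$ and the evolution times literally add, $t_g = t_1 + t_2$. (Care is needed with the normalization bookkeeping: with $|\bphi^{(1)}| = |\bphi^{(2)}| = |\btheta| = 1$, the condition $\bphi^{(1)}+\bphi^{(2)} = \btheta$ can only hold when $\bphi^{(1)} = \bphi^{(2)} = \btheta$, which I suspect is the intended — slightly degenerate — reading; I would state this carefully rather than gloss over it.)

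The main obstacle I anticipate is not the inequality itself, which is essentially the metric triangle inequality plus Lemma~\ref{lemma:minimal_path}, but rather making the identification $W = U(\btheta;t_g)$ precise — in particular arguing that the minimal geodesic from $I$ to $W$ can be taken of the one-parameter-subgroup form $e^{Xt}$ with $X$ a \emph{normalized} element of $\su{N}$, and reconciling the various normalization conventions ($\vert\btheta\vert = 1$ versus $\Tr{A(\btheta)^\dagger A(\btheta)} = 1$, linked by Lemma~\ref{lemma:trace_of_square}) so that ``evolution time'' is unambiguously $\sqrt{N}^{-1}$ times Riemannian arc length. A secondary subtlety is that $\SU{N}$ is not simply connected-flat, so distinct geodesics of differing lengths connect $I$ and $W$; Lemma~\ref{lemma:minimal_path} is exactly what lets me pick the minimal one and thereby define $t_g$ unambiguously as the minimal generation time.
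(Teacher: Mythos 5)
Your proposal follows essentially the same route as the paper's proof: identify the product with a point $V$ on $\SU{N}$, invoke Lemma~\ref{lemma:minimal_path} to obtain a minimal one-parameter-subgroup geodesic $e^{\bar{A}(\btheta)t}$ reaching $V$ at time $t_g$, use Lemma~\ref{lemma:path_distance} to equate evolution time with arc length, and conclude $t_g \le t_1 + t_2$ because the concatenated two-gate path is just one competing curve; in fact you make explicit the left-invariance and triangle-inequality steps that the paper's proof leaves implicit. One small correction to your parenthetical on the equality case: two unit vectors summing to a unit vector need not coincide (they need only subtend an angle of $2\pi/3$), so the condition $\bphi^{(1)}+\bphi^{(2)}=\btheta$ does not force $\bphi^{(1)}=\bphi^{(2)}=\btheta$ --- though your underlying point stands that genuine equality $t_g=t_1+t_2$ requires the two generators to be parallel, a subtlety the paper's own proof also glosses over.
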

\begin{proof}
    The product $U(\bphi^{(2)};t_2)U(\bphi^{(1)};t_1)$ corresponds to a specific point $V$ on the manifold $\SU{N}$. By Lemma~\ref{lemma:minimal_path}, there exists a geodesic between the identity $I$ and $V$, given by the curve $e^{Xt}$ that is of minimal length. We can parameterize this geodesic as $U(\bphi^{(2)};t_g) = \exp{\Bar{A}(\btheta) t_g}$, which is always possible since $A(\btheta)$ parameterizes an arbitrary point in $\su{N}$ and is an $\mathbb{SU}(N)$ gate. By Lemma~\ref{lemma:path_distance}, the length of this path only depends on the norm of $\Bar{A}(\btheta)$, which is $1$, and on $t_g$, which gives
    \begin{align}
    t_g = L[U(\btheta;t),t_g)].
    \end{align}
    Since this path is minimal, we have
    \begin{align}
        t_g \leq t_1+t_2
    \end{align}
    with equality if $\bphi^{(1)}+\bphi^{(1)} = \btheta$.
\end{proof}

\subsubsection{\label{app:th:gate_speed_limit_su2}Special case of~\texorpdfstring{$\SU{2}$}{}}
In the following we give the additional time for decomposing an optimal $\mathbb{SU}(2)$ gate into two gates. By \emph{optimal}, we refer to the geodesic along the minimal path length curve -- see Lemma~\ref{lemma:minimal_path}.

We consider the optimal $\mathbb{SU}(2)$ gate $U(\btheta;t_g)$ with geodesic evolution time $t_g$ together with a decomposition $U(\bphi^{(2)};t_2)U(\bphi^{(1)};t_1)=U(\btheta;t_g)$.
The decomposed circuit is given by two unitary evolutions. Each individual evolution $U(\tilde{\bphi}^{(\nu)};t_\nu)$ is a $\textrm{U}(1)$ rotation such that only a single basis element is required. With two rotations, the overall evolution is an element of $\SU{2}$. The corresponding $\su{2}$ algebra is spanned by three basis elements---the three Pauli matrices for example. The two rotations can be represented as lying on a Bloch sphere. A unitary transformation, $K \in \SU{N}$, can therefore be applied to the evolution such that $U(\tilde{\bphi}^{(\nu)};t_\nu) = K U(\bphi^{(\nu)};t_\nu)K^\dagger$ for
    \begin{align}
        \tilde{\bphi}^{(\nu)} = 
        \begin{pmatrix}          \sin(\alpha_{\nu})\cos(\beta_{\nu})\\ \sin(\alpha_{\nu})\sin(\beta_{\nu})\\
        \cos(\alpha_{\nu}) 
        \end{pmatrix},
    \end{align}
    where $\nu = 1,2$, with $\alpha_\nu$ and $\beta_\nu$ parameterizing the rotations. By construction $\tilde{\bphi}^{(\nu)} \cdot \tilde{\bphi}^{(\nu)} =1$ is normalised for all parameters $\alpha_\nu$ and $\beta_\nu$. The same transformation $K$ defines $U(\tilde{\btheta};t_g) = K U(\btheta;t_g) K^\dagger$ and gives the same relationship $U(\tilde{\btheta};t_g) =U(\tilde{\bphi}^{(2)};t_2) U(\tilde{\bphi}^{(1)};t_1) $. This is straightfoward to show
    \begin{align}
        U(\tilde{\btheta};t_g) &= K U(\btheta;t_g) K^\dagger \\ 
        &= K U(\bphi^{(2)};t_2)  U(\bphi^{(1)};t_2)  K^\dagger \\
        &= K U(\bphi^{(2)};t_2) K^\dagger K  U(\bphi^{(1)};t_2)  K^\dagger \\
        &= U(\tilde{\bphi}^{(2)};t_2) U(\tilde{\bphi}^{(1)};t_1).
    \end{align}
    We have
    \begin{align}
        \bar{A}(\tilde{\bphi}^{(\nu)} ) = \sin(\alpha_{\nu})\cos(\beta_{\nu}) G_1 + \sin(\alpha_{\nu})\sin(\beta_{\nu}) G_2 + \cos(\alpha_{\nu}) G_3,
    \end{align}
    where we choose $G_1 = i X$, $G_2 = i Y$, and $G_3 = i  Z$. These basis elements of $\su{2}$ generate the group $\SU{2}$. We also define the basis vector $\bm{G} = (G_1, G_2, G_3)$. 
    Exponentiation therefore gives the closed-form expression
    \begin{align}
        \exp{\bar{A}(\tilde{\bphi}^{(\nu)})t_\nu} &= \exp{\left( \tilde{\bphi}^{(\nu)} \cdot \bm{G} \right) t_\nu} \\
        &=\cos(t_\nu) I^{\otimes N_\mathrm{qubits}} + \sin(t_\nu) \tilde{\bphi}^{(\nu)} \cdot \bm{G}.
    \end{align}
    By the group composition law of $\SU{2}$, the product of two exponentials in $\SU{2}$ also gives a closed-form expression,
    \begin{align}
        \exp{\bar{A}(\tilde{\bphi}^{(2)})t_2} \exp{\bar{A}(\tilde{\bphi}^{(1)})t_1} &= \left(\cos(t_2) I^{\otimes N_\mathrm{qubits}} + \sin(t_2) \tilde{\bphi}^{(2)} \cdot \bm{G}\right)\left(\cos(t_1) I^{\otimes N_\mathrm{qubits}} + \sin(t_1) \tilde{\bphi}^{(1)} \cdot \bm{G} \right).
    \end{align}
    Collecting terms gives
    \begin{multline}
        \exp{\bar{A}(\tilde{\bphi}^{(2)})t_2} \exp{\bar{A}(\tilde{\bphi}^{(1)})t_1} = \left( \cos(t_1) \cos(t_2) - \tilde{\bphi}^{(1)}\cdot\tilde{\bphi}^{(2)} \sin(t_1)\sin(t_2)\right) I^{\otimes N_\mathrm{qubits}} \\ + \left( \cos(t_2) \sin(t_1) \tilde{\bphi}^{(1)} +  \cos(t_1) \sin(t_2) \tilde{\bphi}^{(2)} + i \sin(t_1)\sin(t_2) \tilde{\bphi}^{(1)}\times\tilde{\bphi}^{(2)} \right) \cdot \bm{G}.\label{eq:phi_1_phi_2_expression}
    \end{multline}
    The total evolution is 
    \begin{align}
        \exp{\bar{A}(\tilde{\bphi}^{(2)})t_2} \exp{\bar{A}(\tilde{\bphi}^{(1)})t_1} &= \exp{\bar{A}(\tilde{\btheta})t_g} \\
        &= \cos(t_g) I^{\otimes N_\mathrm{qubits}} + \sin(t_g) \tilde{\btheta} \cdot \bm{G}. \label{eq:theta_g_expression}
    \end{align}
    By comparison of Eqs.~\eqref{eq:phi_1_phi_2_expression} and~\eqref{eq:theta_g_expression}, we find
    \begin{align}
        \tilde{\btheta} = \frac{1}{\sin(t_g)}\left( \cos(t_2) \sin(t_1) \tilde{\bphi}^{(1)} +  \cos(t_1) \sin(t_2) \tilde{\bphi}^{(2)} + i \sin(t_1)\sin(t_2) \tilde{\bphi}^{(1)}\times\tilde{\bphi}^{(2)} \right),
    \end{align}
    and
    \begin{align}
         \cos(t_g) = \cos(t_1)\cos(t_2) - \tilde{\bphi}^{(1)}\cdot\tilde{\bphi}^{(2)}\sin(t_1)\sin(t_2).
    \end{align} 
    The additional evolution time is $\Delta t = t_d - t_g$, with $t_d = t_1 + t_2$ the total decomposed unitary evolution time. Due to the invariance of the scalar product, $\tilde{\bphi}^{(1)}\cdot\tilde{\bphi}^{(2)}= \bphi^{(1)}\cdot\bphi^{(2)}$, the additional time $\Delta t=t_d-t_g$ required by the decomposition is then given by
\begin{align*}
    \Delta t = t_d - \arccos\big(\cos(t_1)\cos(t_2) - \bphi^{(1)}\cdot\bphi^{(2)}\sin(t_1)\sin(t_2)\big) \geq 0.
\end{align*}

\section{Unique spectral gaps of Dynamical Lie Algebras \label{app:th4}}
\subsection{Proof of \Cref{th:spectral}}
We restate \Cref{th:spectral} here for convenience.
\begin{theorem}
The number of unique spectral gaps $R$ of $\Omega_l(\btheta)$ is upper bounded by the number of roots $|\Phi|$ of any maximal semi-simple DLA,
\begin{align}
    R \leq |\Phi| /2.
\end{align}
\end{theorem}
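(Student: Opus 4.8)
The plan is to conjugate $\Omega_l(\btheta)$ into a Cartan subalgebra of the DLA without changing its spectrum, and then to read the eigenvalue gaps off the root space decomposition. First I would observe that every nested commutator in \Cref{eq:u_dtheta} stays inside the DLA, so $\Omega := \Omega_l(\btheta)$ lies in $\mathfrak{g} := \mathcal{L}(A(\btheta))$, which by assumption is a maximal semisimple subalgebra of $\su{N}$. Conjugating $\Omega$ by any $g \in \exp(\mathfrak{g}) \subseteq \SU{N}$ leaves its eigenvalues as an $N\times N$ matrix, hence the set of unique spectral gaps $\Gamma$, unchanged. Since $\mathfrak{g}$ is a compact semisimple Lie algebra, the maximal torus theorem lets us conjugate $\Omega$ into a fixed maximal torus $\mathfrak{t} \subset \mathfrak{g}$, so without loss of generality $\Omega \in \mathfrak{t}$.

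Next I would pass to the root space decomposition $\mathfrak{g}_{\mathbb{C}} = \mathfrak{h} \oplus \bigoplus_{\alpha \in \Phi}(\mathfrak{g}_{\mathbb{C}})_\alpha$, with Cartan subalgebra $\mathfrak{h} = \mathfrak{t}_{\mathbb{C}}$ and root system $\Phi$. The module the circuit acts on---the fundamental Pauli representation on $\mathbb{C}^N$, or, by the remark after the theorem, any irreducible $\su{N}$-representation restricted to $\mathfrak{g}$---splits into $\mathfrak{h}$-weight spaces $\bigoplus_\mu V_\mu$ on which $\Omega$ acts by the scalar $\mu(\Omega) \in i\mathbb{R}$. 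So the eigenvalues of $\Omega$ are the $\mu(\Omega)$, every spectral gap is $(\mu - \nu)(\Omega)$ for occurring weights $\mu \neq \nu$, and hence $R$ is at most the number of distinct weight differences counted up to sign. The core claim is that each nonzero such difference is a root, $\mu - \nu \in \Phi$; granting it, the symmetry $\Phi = \Phi^+ \sqcup (-\Phi^+)$ shows $\alpha(\Omega)$ and $(-\alpha)(\Omega)$ have equal magnitude, so there are at most $|\Phi^+| = |\Phi|/2$ distinct gaps in absolute value. Equivalently one may argue through $\ad{\Omega}$: its nonzero eigenvalues on $\mathfrak{g}_{\mathbb{C}}$ are exactly the $\alpha(\Omega)$, $\alpha \in \Phi$, and one must show these already exhaust the nonzero eigenvalues of $\ad{\Omega}$ on $\mathfrak{gl}(N,\mathbb{C})$, i.e.\ the operator gaps $i(\lambda_j - \lambda_{j'})$.

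The main obstacle is precisely this core claim---that every nonzero eigenvalue gap of $\Omega$ on the Hilbert space is realised by the root action on the DLA. For $\mathfrak{g} = \su{N}$ it is immediate: the weights of the fundamental representation are the $e_i$ and their differences $e_i - e_j$ are exactly the roots of $A_{N-1}$, so $R \le |\Phi|/2 = N(N-1)/2$ recovers the known generic bound $R \le d(d-1)/2$ with $d = N$. For a proper maximal semisimple DLA one has to combine the weight structure of its defining module with the classification of maximal subalgebras of the classical Lie algebras~\cite{dynkin1957american}, which is where the maximality hypothesis enters; I would organise this as a case distinction along that classification and expect the cases realised by the explicit Pauli-string DLAs to be the relevant ones.
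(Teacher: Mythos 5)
Your strategy is the same one the paper uses: conjugate $\Omega_l(\btheta)$ into a Cartan subalgebra of the DLA, invoke the root space decomposition, and pair roots $\pm\alpha$ to obtain the factor $1/2$. (Your use of the maximal torus theorem to conjugate within $\exp(\mathfrak{g})$ is in fact cleaner than the paper's corresponding step, which diagonalizes $\Omega$ with an arbitrary $V\in\SU{N}$ and asserts the result lies in $\mathfrak{h}$.) The substantive difference is that you explicitly isolate, and leave open, the claim that every nonzero spectral gap of $\Omega$ on $\mathbb{C}^N$ equals $\alpha(\Omega)$ for some root $\alpha$ of the DLA --- equivalently, that every difference of occurring weights of the module is a root. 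That is a genuine gap in your argument, and it is precisely the point where the paper's own proof is thinnest: the paper forms $e_{nm}=V^\dagger E_{nm}V$ and computes $\ad{h}(e_{nm})=(\lambda_n-\lambda_m)e_{nm}$, but the $e_{nm}$ span $\mathfrak{gl}(N,\mathbb{C})$, not $\mathfrak{g}_{\mathbb{C}}$, so this identifies the gaps with eigenvalues of $\ad{h}$ on the ambient matrix algebra rather than with roots of the DLA. The identification is exact only when the relevant $e_{nm}$ lie in $\mathfrak{g}_{\mathbb{C}}$, which holds for $\mathfrak{g}=\su{N}$ in the defining representation --- the one case you actually verify.

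For a proper subalgebra the deferred case analysis cannot succeed as a general representation-theoretic fact, because the core claim is simply false there. Take the paper's own TFIM example, $\mathfrak{g}=\mathfrak{so}(4)\cong\su{2}\oplus\su{2}$ acting on $\mathbb{C}^4$ with Cartan subalgebra spanned by $iX\otimes I$ and $iI\otimes X$: the module carries the weights $(\pm\tfrac12,\pm\tfrac12)$, whose differences include $(1,\pm1)$, which are not roots of $\su{2}\oplus\su{2}$. Concretely, a generic Cartan element $i(aX\otimes I+bI\otimes X)$ has eigenvalues $i(\pm a\pm b)$ and hence the four distinct gaps $2|a|,2|b|,2|a+b|,2|a-b|$, exceeding $|\Phi|/2=2$. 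So your bound "$R\le$ number of weight differences up to sign $\le|\Phi|/2$" breaks at the second inequality, and any complete argument must either restrict the admissible DLAs and representations or count weight differences of the module directly instead of roots of the algebra (which is, implicitly, what the paper's definition $\Phi=\{\lambda_n-\lambda_m\}$ does). In short: you have correctly located the crux of the proof, but the route you propose for closing it would fail.
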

In the following we set $\mathfrak{g}$ to be a semisimple Lie algebra. A subspace $\mathfrak{a} \subseteq \mathfrak{g}$ is called a subalgebra if it is closed under the Lie bracket, i.e., if $\comm{a_1}{a_2}\in\mathfrak{a}$, $\forall a_1,a_2 \in \mathfrak{a}$. 
Since $\mathfrak{g}$ is a semisimple Lie algebra, it always contains a subalgebra called a Cartan subalgebra~\cite{hall2013lie} (Chapter 7, Definition 7.10).
\begin{definition}
A Cartan subalgebra $\mathfrak{h}$ of $\mathfrak{g}$ is a subalgebra that satisfies the following conditions:
\begin{enumerate}
    \item For all $ h_1, h_2 \in\mathfrak{h}$, $\comm{h_1}{h_2} = 0$.
    \item For all $ x \in \mathfrak{g}$, if $\comm{h}{x}=0$ for all $h\in\mathfrak{h}$, then $x \in \mathfrak{h}$. 
\end{enumerate}
\end{definition}
The first condition tells us that $\mathfrak{h}$ is a commutative subalgebra of $\mathfrak{g}$, while the second condition says that $\mathfrak{h}$ is maximal, i.e., there is no larger commutative subalgebra. The first step in proving  \Cref{th:spectral} is to make use of the following result:
\begin{theorem}\label{app:th:root}
\cite[Chapter VI, Theorem 1]{Serre2000complex}. If $\mathfrak{g}$ is a semisimple Lie algebra, we can write $\mathfrak{g}$ as a direct sum of the root spaces $\mathfrak{g}_\alpha$:
\begin{align}
    \mathfrak{g} = \bigoplus_{\alpha}\mathfrak{g}_\alpha,
\end{align}
where
\begin{align}
    \mathfrak{g}_{\alpha\in\mathfrak{h}^*} = \{x \in \mathfrak{g}| \mathrm{ad}_{h}(x) = \alpha(h) x,\,\forall h \in\mathfrak{h}\},
\end{align}
and $\alpha \in\mathfrak{h}^*$ are functionals on $\mathfrak{h}$.
That is, a root space is a subspace of $\mathfrak{g}$ on which the action of the adjoint representation of $\mathfrak{h}$ is described by a functional (and scalar multiplication).
\end{theorem}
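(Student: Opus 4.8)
The plan is to realize the stated decomposition as the simultaneous eigenspace decomposition of the commuting family of operators $\{\ad{h} : h\in\mathfrak{h}\}$ acting on $\mathfrak{g}$, where $\mathfrak{h}$ is a Cartan subalgebra whose existence is taken for granted. First I would extract two structural consequences of the two defining properties of $\mathfrak{h}$. The abelian condition gives $\comm{\ad{h_1}}{\ad{h_2}} = \ad{\comm{h_1}{h_2}} = 0$, so the family $\{\ad{h}\}$ commutes pairwise; the self-centralizing condition says that any $x$ annihilated by every $\ad{h}$ already lies in $\mathfrak{h}$, which will later identify the zero-eigenvalue space with $\mathfrak{h}$. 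The whole argument then reduces to two classical ingredients: (a) each $\ad{h}$ is diagonalizable, and (b) a commuting family of diagonalizable operators on a finite-dimensional space is simultaneously diagonalizable.

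The substantive step is (a), namely that $\mathfrak{h}$ is \emph{toral}: every $\ad{h}$ with $h\in\mathfrak{h}$ is semisimple. I would invoke the abstract Jordan decomposition available in a semisimple Lie algebra, which writes each $h\in\mathfrak{g}$ uniquely as $h=s+n$ with $s,n\in\mathfrak{g}$, $\ad{s}$ diagonalizable, $\ad{n}$ nilpotent, $\comm{s}{n}=0$, and with $\ad{s},\ad{n}$ commuting with every endomorphism that commutes with $\ad{h}$. For $h\in\mathfrak{h}$ the abelian property makes $\ad{h}$ commute with all $\ad{h'}$, so $\comm{s}{h'}=\comm{n}{h'}=0$ for every $h'\in\mathfrak{h}$, and the self-centralizing property then forces $s,n\in\mathfrak{h}$. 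To conclude $n=0$ I would use semisimplicity of $\mathfrak{g}$ through the nondegeneracy of the Killing form $\kappa(x,y)=\Tr{\ad{x}\ad{y}}$ (Cartan's criterion): since $\ad{n}$ is nilpotent and commutes with every $\ad{h'}$, the product $\ad{n}\ad{h'}$ is nilpotent and hence $\kappa(n,h')=0$ for all $h'\in\mathfrak{h}$, so $n$ lies in the radical of $\kappa$ restricted to $\mathfrak{h}$; using that distinct weight spaces are $\kappa$-orthogonal together with the self-centralizing identity $\mathfrak{g}_0=\mathfrak{h}$, the restriction $\kappa|_{\mathfrak{h}}$ is nondegenerate, forcing $n=0$. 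I expect establishing this toral property---and in particular the nondegeneracy of $\kappa|_{\mathfrak{h}}$---to be the main obstacle, since it is the only place where genuine semisimplicity rather than pure linear algebra is used.

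With toral-ness in hand, step (b) is routine: a finite collection of pairwise-commuting diagonalizable endomorphisms admits a common eigenbasis, so $\mathfrak{g}$ splits into joint eigenspaces. For a joint eigenvector $x$ the scalar by which $\ad{h}$ acts depends linearly on $h$, because $\ad{h}$ is linear in $h$, so the assignment $h\mapsto\alpha(h)$ defines a functional $\alpha\in\mathfrak{h}^*$, and the corresponding joint eigenspace is exactly $\mathfrak{g}_\alpha=\{x\in\mathfrak{g} : \ad{h}(x)=\alpha(h)x\ \forall h\in\mathfrak{h}\}$. Collecting the distinct functionals that actually occur yields the direct sum $\mathfrak{g}=\bigoplus_\alpha\mathfrak{g}_\alpha$. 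Finally I would note that the functional $\alpha=0$ reproduces the centralizer of $\mathfrak{h}$, which the self-centralizing condition identifies with $\mathfrak{h}$ itself, so the $\alpha=0$ summand is precisely $\mathfrak{h}$ and the nonzero $\alpha$ are the roots.
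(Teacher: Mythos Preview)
The paper does not prove this statement; it is quoted as \cite[Chapter VI, Theorem 1]{Serre2000complex} and used as a black box in the proof of \Cref{th:spectral}. There is therefore no in-paper argument to compare against, and your outline already goes well beyond what the paper does.

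That said, your step (a) contains a genuine circularity. To kill the nilpotent part $n$ in the abstract Jordan decomposition of $h\in\mathfrak{h}$ you appeal to nondegeneracy of $\kappa|_{\mathfrak{h}}$, and you justify that via ``distinct weight spaces are $\kappa$-orthogonal together with $\mathfrak{g}_0=\mathfrak{h}$''---but the weight-space decomposition is precisely what you are proving, so you cannot yet invoke orthogonality of its summands. In fact, with the definition of $\mathfrak{h}$ exactly as written in the paper (abelian and self-centralizing, nothing more) the toral conclusion is false: in $\mathfrak{sl}(2,\mathbb{C})$ the span of a nilpotent root vector $e$ is abelian and equals its own centralizer, yet $\ad{e}$ is nilpotent and $\kappa|_{\mathrm{span}(e)}=0$. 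The resolution is that the paper has under-stated its own hypothesis: the standard definition of a Cartan subalgebra of a complex semisimple Lie algebra (including the one in the Hall reference the paper cites) additionally requires every $\ad{h}$ to be diagonalizable. With that in hand---or, in the concrete setting $\mathfrak{g}\subseteq\su{N}$ the paper actually uses, with skew-Hermiticity giving diagonalizability of $\ad{h}$ for free---your step (a) becomes vacuous and the remainder of your argument (simultaneous diagonalization of commuting semisimple operators, joint eigenvalues defining functionals $\alpha\in\mathfrak{h}^*$, and $\mathfrak{g}_0=\mathfrak{h}$ from self-centralizing) is correct and standard.
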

The above decomposition is called a root space decomposition, which is an essential tool in classifications of Lie algebras~\cite{dynkin1957american, Serre2000complex}.
Since 
\begin{align}
    \mathfrak{g}_0 = \{ x \in\mathfrak{g} | \ad{h}(x) = 0,\forall h \in\mathfrak{h}\},
\end{align}
we find that $\mathfrak{h} = \mathfrak{g}_0$ and hence
\begin{align}
    \mathfrak{g} = \mathfrak{h} \oplus \bigoplus_{\alpha\neq 0}\mathfrak{g}_\alpha.
\end{align}
We then immediately see that
\begin{align}
    \dim{\mathfrak{g}} = \dim{\mathfrak{h}} + \sum_{\alpha\neq 0} \dim{\mathfrak{g}_\alpha}.
\end{align}
We can thus relate the dimensionality of a Lie algebra to the dimensionality of its Cartan subalgebra and its weight spaces. 
The second step of the proof relies on identifying the unique spectral gaps of $\Omega_l(\btheta)$ with the weight spaces $\mathfrak{g}_\alpha$. To achieve this, we will construct the linear operator $\ad{h}$ and apply it to the eigenbasis of $\Omega_l(\btheta)$ to show that the maps $\alpha$ can be identified with the spectral gaps of $\Omega_l(\btheta)$.

Consider an element $\Omega \in \mathfrak{g}$,
where $\mathfrak{g}\subseteq \su{N}$ is a non-trivial subalgebra and $\mathfrak{h}$ is a Cartan subalgebra of $\mathfrak{g}$. Since $\mathfrak{h}$ is the Lie algebra of a maximally abelian group, we can represent elements of $\mathfrak{h}$ by diagonal matrices. Since $\Omega$ is skew-Hermitian, there exists a unitary $V\in\SU{N}$ that diagonalizes $\Omega$, i.e., $V^{\dag} \Omega V=h$ with $h\in\mathfrak{h}$. Here, $V$ is the matrix with columns equal to the eigenvectors $v_k$ of $\Omega$ with corresponding eigenvalues $\lambda_k$. We can thus always choose a basis for $\mathfrak{g}$ such that $\Omega$ is diagonal. If $\Omega$ is non-degenerate, then it must be full rank, and therefore an element of $\mathfrak{h}$. All Cartan algebras are equivalent up to conjugacy, hence we can choose the matrix $h$ to be the diagonal matrix containing the eigenvalues of $\Omega$ to represent the Cartan subalgebra $\mathfrak{h}$. We now take $E_{nm}$ to be the matrix with entries $(n,m)$ equal to 1 and all other entries to 0. 
Define the operator
\begin{align}
    e_{nm} = V^\dag E_{nm} V,
\end{align}
and apply $\ad{h}$ to it:
\begin{align}
    \ad{h}(e_{nm})
    &= he_{nm}-e_{nm}h\\
    &= V^\dag \Omega E_{nm} V - V^\dag E_{nm} \Omega V\\
    &= V^\dag h E_{nm} V - V^\dag E_{nm} h V\\
    &= (\lambda_n - \lambda_m) e_{nm}.
\end{align}
This means that $\ad{h}$ has the eigenvectors $e_{nm}$ with corresponding eigenvalues $\alpha_{nm}(h) = \lambda_n - \lambda_m$~\cite{humphreys2012introduction}, and so we have identified the eigenvalue differences with the roots of the Lie algebra. We define the set of all roots as
\begin{align}
    \Phi = \{ \lambda_n - \lambda_m, n\neq m=1,\ldots, N\}.
\end{align}
Since the dimensionality of each weight space is one \cite[Chapter VI, Theorem 2(a)]{Serre2000complex}, we can see that
\begin{align}
    \sum_{\alpha\neq 0} \dim{\mathfrak{g}_\alpha} = |\Phi|.
\end{align}
Therefore, 
\begin{align}
    \dim{\mathfrak{g}} = \dim{\mathfrak{h}} + \abs{\Phi}.
\end{align}
We now set $\mathfrak{g}=\mathcal{L}(A(\btheta))$. If we take the absolute value of the elements of $\Phi$, we can identify $R = |\Phi|/2$, where the factor $1/2$ is to account for double the counting of the spectral gaps. Since $\Omega$ can be degenerate in general, we obtain the inequality $R \leq |\Phi|/2$. 
With this, the proof of \Cref{th:spectral} is completed.

\subsection{Examples}
Here, we give several examples of maximal DLAs and their corresponding value of $|\Phi|/2$. Analogous to the main text, we choose the Pauli representation but these results should hold for any irreducible representation of $\su{N}$.
\begin{enumerate}
    \item $\su{2}$: For a 1-qubit system, there are no non-trivial subalgebras, hence we can only look at the full special unitary Lie algebra $\su{2}$. Any $A(\btheta)$ that consists of two Pauli operators will generate this algebra, e.g.,
    \begin{align}
        A(\btheta) = i(\theta_1X  + \theta_2 Y)
    \end{align}
    will give $\mathcal{L}(A(\btheta)) = \su{2}$. A Cartan subalgebra of $\su{2}$ is given by $\mathfrak{h} = \Span{Z}$. We therefore find that $\dim{\mathfrak{g}} = 3$ and $\dim{\mathfrak{h}} = 1$ and so $|\Phi| = 2$. Hence we have $R\leq 1$ and need $2R\leq 2$ shifts. This matches the result in~\cite{Schuld2019grads}, where the parameter-shift rule was generalized from single Pauli matrices to Hermitian operators with two unique eigenvalues.
    
    \item TFIM: A DLA that has been studied before~\cite{larocca2022diagnosing, Kokcu2021cartan} is the 1D transverse field Ising-Model (TFIM) Hamiltonian:
    \begin{align}
    A(\btheta) = i (\theta_{1}X \otimes I + \theta_{2}I\otimes X  + \theta_{3}Z \otimes Z ),
    \end{align}

    with $\mathcal{L}(A(\btheta)) = \Span{X \otimes I, I \otimes X, Y\otimes Y,Z\otimes Z, Z\otimes Y,Y\otimes Z}$. 
    We can take $\mathfrak{h} = \Span{X \otimes I, I \otimes X}$ as a Cartan subalgebra and so $\dim{\mathfrak{g}} = 6$ and $\dim{\mathfrak{h}} = 2$, which gives $|\Phi| = 4$. Hence we need (at most) 4 shifts to obtain the gradient of an operator in the DLA of the TFIM, which corresponds to $\mathfrak{so}(4)$.

    \item $\su{4}$: The full Lie algebra of $\su{4}$ is spanned by
    \begin{align}
        A(\btheta) = \sum_m \theta_m G_m,
    \end{align}
    where $G_m \in \mathcal{P}^{4}$ are the tensor products as defined in \Cref{eq:paulis}. A Cartan subalgebra of $\su{4}$ is given by $\mathfrak{h} = \{Z\otimes I, I\otimes Z, Z\otimes Z\}$.
    This means that $\dim{\mathfrak{g}} = 15$ and $\dim{\mathfrak{h}} = 3$, which gives $|\Phi|=12$. Hence we have $R=6$ and need 12 shifts to obtain the gradient for a general operator in $\su{4}$.
\end{enumerate}

In the above examples, we have only been concerned with the dimensionality of the root system. We could go one step further and look at the structure of the root systems. It turns out that there exists only a finite set of root systems, which leads to the classification of all semisimple Lie algebras (such a program was originally carried out by Dynkin~\cite{dynkin1957american} and is explained in most textbooks on Lie algebras~\cite{rossmann2002lie, humphreys2012introduction, hall2013lie}). This allows us to make the following observation about DLAs and the $\mathbb{SU}(N)$ gates in our work: there is a finite number of families of $\mathbb{SU}(N)$ gates for each $N$, given by the possible DLAs. Again, we emphasize that this is independent of the representation of the algebra. We summarize the above results together with the identification of the corresponding classical group in \Cref{tab:classif}~\cite{rossmann2002lie} (Chapter 3, Table 3.4).
\begin{table}[htb!]
    \centering
    \begin{tabular}{c|c|c|c|c}
        Name & $\dim(\mathfrak{g})$ & $\dim(\mathfrak{h})$ & $|\Phi|$& Classical group\\\hline
        $\su{2}$ & 3 & 1 & 2& $A_1$\\
        $\mathfrak{so}(4)$ & 6 & 2 & 4 &$A_1 \times A_1\cong D_2$\\
        $\su{4}$ & 15 & 3 & 12& $A_2$\\
    \end{tabular}
    \caption{Examples of DLAs and the size of the root spaces. Each root system $\Phi$ can be identified with a Lie algebra of one of the classical groups $A_n,B_n,C_n,D_n$. The classical group $D_2$ corresponds to $\mathrm{SO}(4)$, with the corresponding Lie algebra $\mathfrak{so}(4)$ which has dimension $N(N-1)/2$.}
    \label{tab:classif}
\end{table}

\end{document}